\newcommand{\lipItem}[1]{\textcolor{lipicsGray}{\sffamily\bfseries\upshape\mathversion{bold}#1}}
\title{Constant congestion linkages in polynomially strong digraphs in polynomial time} 
\author{Raul Lopes}
{LIRMM, Université de Montpellier, Montpellier, France}
{raul.wayne@gmail.com}
{https://orcid.org/0000-0002-7487-3475}
{French project ELIT (ANR-20-CE48-0008-01).}
\author{Ignasi Sau}
{LIRMM, Université de Montpellier, Montpellier, France}
{ignasi.sau@lirmm.fr}
{https://orcid.org/0000-0002-8981-9287}
{French project ELIT (ANR-20-CE48-0008-01).}
\authorrunning{R. Lopes and I. Sau}
\keywords{Directed treewidth, Brambles, Linkages, Disjoint Paths.} 
\tikzset{faded/.style={gray,very thin}}
\tikzset{vertex/.style={draw,circle,minimum size=5pt,inner sep=0pt}}
\tikzset{novertex/.style={circle,minimum size=5pt,inner sep=0pt}}
\tikzset{blackvertex/.style={draw,circle,minimum size=5pt,inner sep=0pt, fill=black}}
\tikzset{redvertex/.style={draw,circle,minimum size=5pt,inner sep=0pt, fill=red}}
\tikzset{redvertexfaded/.style={draw,circle,faded,minimum size=5pt,inner sep=0pt, fill=red!50}}
\tikzset{greenvertex/.style={draw,circle,minimum size=5pt,inner sep=0pt, fill=green}}
\tikzset{greenvertexfaded/.style={draw,circle,faded,minimum size=5pt,inner sep=0pt, fill=green!50}}
\tikzset{bluevertex/.style={draw,circle,minimum size=5pt,inner sep=0pt, fill=blue}}
\tikzset{bluevertexfaded/.style={draw,circle,faded,minimum size=5pt,inner sep=0pt, fill=blue!50}}
\tikzset{yellowvertex/.style={draw,circle,minimum size=5pt,inner sep=0pt, fill=yellow}}
\tikzset{yellowvertexfaded/.style={draw,circle,faded,minimum size=5pt,inner sep=0pt, fill=yellow!50}}
\tikzset{arrow/.style={-{Latex[scale=1]},shorten >= 0pt}}
\tikzset{edge/.style = {->,> = latex'}}
\tikzset{snake it/.style={decorate, decoration=snake}}
\newcommand{\dout}{{\sf deg}^+\xspace}
\newcommand{\din}{{\sf deg}^-\xspace}
\newcommand{\Nin}{{\sf N}^-\xspace}
\newcommand{\Nout}{{\sf N}^+\xspace}
\newcommand{\dtw}{{\sf dtw}\xspace}
\newcommand{\wlink}{{\sf wlink}\xspace}
\newcommand{\Int}{{\sf Int}\space}
\newcommand{\order}{\text{{\sf ord}}\xspace}
\newcommand{\occur}{\textsf{{\sf oc}}\xspace}
\newcommand{\ddpc}[1]{#1-\textsc{DDP}}
\newcommand{\masarik}{Masa\v{r}\'{\i}k\xspace}
\DeclareMathOperator{\Ocal}{\mathcal{O}}
\newcommand{\FPT}{\textsf{FPT}\xspace}
\newcommand{\XP}{\textsf{XP}\xspace}
\newcommand{\NP}{\textsf{NP}\xspace}
\newcommand{\Wh}{\textsf{W}[1]\xspace}
\newcommand{\poly}{\textsf{poly}\xspace}
\newcommand{\vbl}{\textsf{vbl}\xspace}
\newcommand{\Prob}{\textsf{Pr}\xspace}
\newcommand{\myParagraph}[1]{\medskip\noindent\textbf{#1}}
\definecolor{mid-green}{rgb}{0.15,0.65,0.15}
\definecolor{dark-green}{rgb}{0.15,0.25,0.15}
\definecolor{dark-red}{rgb}{0.7,0.15,0.15}
\definecolor{dark-blue}{rgb}{0.15,0.15,0.9}
\definecolor{medium-blue}{rgb}{0,0,0.5}
\definecolor{gray}{rgb}{0.5,0.5,0.5}
\definecolor{color-Ig}{rgb}{0.15,0.7,0.15}
\definecolor{darkmagenta}{rgb}{0.30, 0.0, 0.30}
\begin{document}

\maketitle

\begin{abstract}
Given positive integers $k$ and $c$, we say that a digraph $D$ is \emph{$(k,c)$-linked} if for every pair of ordered sets $\{s_1, \ldots, s_k\}$ and $\{t_1, \ldots, t_k\}$ of vertices of $D$, there are paths $P_1, \ldots, P_k$ such that for $i \in [k]$ each $P_i$ is a path from $s_i$ to $t_i$ and every vertex of $D$ appears in at most $c$ of those paths.
A classical result by Thomassen [Combinatorica, 1991] states that, for every fixed $k \geq 2$, there is no integer $p$ such that every $p$-strong digraph is $(k,1)$-linked.

Edwards et al. [ESA, 2017] showed that every digraph $D$ with directed treewidth at least some function $f(k)$ contains a large bramble of congestion $2$.
Then, they showed that every $(36k^3 + 2k)$-strong digraph containing a bramble of congestion $2$ and size roughly $188k^3$ is $(k,2)$-linked.
Since the directed treewidth of a digraph has to be at least its strong connectivity, this implies that there is a function $L(k)$ such that every $L(k)$-strong digraph is $(k,2)$-linked.
The result by Edwards et al. was improved by Campos et al. [ESA, 2023], who showed that any $k$-strong digraph containing a bramble of size at least $2k(c\cdot k -c + 2) + c(k-1)$ and congestion $c$ is $(k,c)$-linked.
Regarding how to find the bramble, although the given bound on $f(k)$ is very large, \masarik et al. [SIDMA, 2022] showed that directed treewidth $\mathcal{O}(k^{48}\log^{13} k)$ suffices if the congestion is relaxed to $8$.
In this article, we first show how to drop the dependence on $c$, for even $c$, on the size of the bramble that is needed in the work of Campos et al. [ESA, 2023].
Then, by making two local changes in the proof of \masarik et al. [SIDMA, 2022] we show how to construct in polynomial time a bramble of size $k$ and congestion $8$ assuming that a large obstruction to directed treewidth (namely, a path system) is given.
Applying those two results, we show that there is polynomial function $g(k)$ such that every $g(k)$-strong digraph is $(k,8)$-linked.
\end{abstract}

\section{Introduction}
In the $k$-\textsc{Directed Disjoint Paths} ($k$-\textsc{DDP}) problem, we are given a digraph $D$ and a set of $k$ pairs of vertices $s_i, t_i \in V(D)$ with $i \in [k]$ (the \emph{terminals}), and the goal is to decide if $D$ contains a set of pairwise vertex-disjoint paths (henceforth abbreviated to \emph{disjoint}) linking each $s_i$ to its pair $t_i$.
A solution to this problem is called a \emph{linkage}.
The $k$-\textsc{DDP} problem is one of the most fundamental problems in digraphs and, in contrast with the undirected version, shown to be \FPT with parameter $k$ by Robertson and Seymour~\cite{robertson1995graph}, Fortune et al.~\cite{fortune1980directed} showed that $k$-\textsc{DDP} is already \NP-complete for $k = 2$.
In addition, Slivkins~\cite{slivkins2010parameterized} showed that this problem is \Wh-hard with parameter $k$ even when restricted to acyclic digraphs (DAGs).

On the positive side, $k$-\textsc{DDP} retains some degree of tractability for restricted classes of digraphs.
For instance, Fortune et al.~\cite{fortune1980directed} showed that it is \XP with parameter $k$ in DAGs, which was later extended to digraphs of bounded \emph{directed treewidth} by Johnson et al.~\cite{Johnson2001},  by presenting an \XP algorithm with both $k$ and the directed treewidth as parameters.

The undirected and directed versions of the problem exhibit a very different behaviour with respect, in particular, to the (directed) treewidth of the input (di)graph. Indeed, the \FPT algorithm for $k$-\textsc{Disjoint Paths} in undirected graphs by Robertson and Seymour~\cite{robertson1995graph} shows that, essentially, the only instances that need to be solved are the ones whose input graph $G$ has bounded treewidth.
Indeed, they show that if the treewidth of $G$ is at least some computable function of $k$, then one can find in $G$ a vertex that can be safely deleted without changing the answer of the problem.
This \emph{irrelevant vertex} is extracted either from a large enough clique minor, or from deep inside a so-called \emph{flat wall} (roughly, a grid-like subgraph that behaves like a planar graph with respect to the routing of paths through its boundary), both structures acting as obstructions to bounded treewidth.

Such a \emph{win-win} strategy is not expected to work for $k$-\textsc{DDP}: since it is already \NP-complete for $k=2$~\cite{fortune1980directed} and \XP with parameters $k$ and directed treewidth~\cite{Johnson2001}, there is no function $f(k)$ that guarantees the existence of an irrelevant vertex that can be efficiently found in digraphs of directed treewidth at least $f(k)$.
More strongly, Fomin and Pilipzcuk~\cite{doi:10.1137/1.9781611973105.29} showed an example of a tournament $T$ with directed treewidth (in fact, even directed pathwidth) $|V(T)|/2$ that does not contain \textsl{any} irrelevant vertex.
Additionally, strong connectivity is also not helpful: Thomassen~\cite{Thomassen1991} showed that, even for fixed $k = 2$, the $k$-\textsc{DDP} problem remains \NP-complete in $p$-strongly connected digraphs (henceforth shortened to \emph{$p$-strong} digraphs) for any integer $p \geq 1$.
Thus, in order to cope with the intractability of $k$-\textsc{DDP}, the focus often shifts from restricting the class of digraphs to relaxing the problem.

The most common relaxation of $k$-\textsc{DDP} allows some degree of  \emph{\sl vertex congestion} for the solutions.
Namely, for integers $k$ and  $c$, an input of the $k$-\textsc{Directed} $c$-\textsc{Congested Disjoint Paths} ($(k,c)$-\textsc{DDP} for short) problem is the same as in $k$-\textsc{DDP}, but each vertex is allowed to appear in at most $c$ of the paths linking the terminals.
Notice that if $c \geq k$ then the problem can be solved trivially by simply testing for connectivity between the pairs of terminals, and thus we may assume that $k > c$.
Amiri et al.~\cite{AKHOONDIANAMIRI2019105836} showed how to modify existing results to prove that $(k,c)$-\textsc{DDP} with parameter $k$ remains \Wh-hard in DAGs and \XP in digraphs of bounded directed treewidth.
Despite considerable effort, it is open whether $(k,c)$-\textsc{DDP} is \XP with parameter $k$ in general digraphs for every fixed value of $c \geq 2$.
This fact, together with the fact that this problem is \XP for $c \geq 1$ in digraphs of bounded directed treewidth, justifies the growing interest in better understanding the behaviour of $(k,c)$-\textsc{DDP} in the presence of obstructions to bounded directed treewidth, such as \emph{brambles} (which we define later in this section and also in~\autoref{def:brambles-digraphs}).

In this direction, Edwards et al.~\cite{Edwards2017} showed conditions ensuring that {\sl every} instance of $(k,2)$-\textsc{DDP} in a digraph $D$ is positive. 
With our notation (see \autoref{def:k-linked-digraphs}), in that case we say that $D$ is \emph{\emph{$(k,c)$}-linked} for $c = 2$.
In particular, they showed that there is a (very large) function $L(k)$ such that every $L(k)$-strong digraph is $(k,2)$-linked.
We briefly discuss their result in the following paragraph.

A \emph{bramble} $\mathcal{B}$ in a digraph $D$ is a collection of strongly connected subgraphs of $D$, called \emph{bags}, such that for any $B,B'\in \mathcal{B}$ the digraph induced by $V(B) \cup V(B')$ is also strong.
The \emph{order} of $\mathcal{B}$ is the minimum size of a vertex set of $D$ intersecting all its elements.
For an integer $c \geq 1$, we say that $\mathcal{B}$ has \emph{congestion} $c$ if every vertex of $D$ appears in at most $c$ bags of $\mathcal{B}$.
Edwards et al.~\cite{Edwards2017} showed how to construct, in polynomial time, a bramble of order $k$ and congestion $2$ from a bramble of order $f(k)$, albeit with a very large exponential dependency on $k$.
This tool is a fundamental part of the main contribution of their paper, where they show that every $(36k^3 + 2k)$-strong digraph containing a bramble of congestion $2$ and size $188k^3$ is $(k,2)$-linked.
Notice that since a vertex can only hit $c$ elements of a bramble with congestion $c$, the order and size of each such bramble is linearly related to its size.
They also show how to find the desired set of paths in polynomial time assuming that the bags of the bramble are given.
Since one can always find a bramble of order $t$ in digraphs of directed treewidth at least $3t-1$ (see~\cite{Campos2022} and \cite[Theorem 9.4.4]{bang-jensen2018classes}), they conclude that there is a function  $g(k)$ such that every $(36k^3 + 2k)$-strong digraph of directed treewidth at least $g(k)$ is $(k,2)$-linked.
Finally, as the directed treewidth of a $t$-strong digraph $D$ is at least $t$, they conclude that there is a function $L(k)$ such that every $L(k)$-strong digraph is $(k,2)$-linked.

The result by Edwards et al.~\cite{Edwards2017} was recently improved and generalized by Campos et al.~\cite{Campos2022}.
For integers $k,c \geq 1$, they showed that every $k$-strong digraph containing a bramble of size $2k(c \cdot k - c + 2) + c(k-1)$ and congestion at most $c$ is $(k,c)$-linked, and that the paths can be found in polynomial time assuming that the bags of the bramble are given.
Applying this result and the construction of brambles of congestion $2$ of~\cite{Edwards2017}, they conclude that every $k$-strong digraph of directed treewidth at least $g(k)$ is $(k,c)$-linked.
The exponential dependence of $k$ in $g(k)$ appears as a consequence of the construction given in \cite{Edwards2017}. \masarik et al.~\cite{Masarik2022} showed that a polynomial bound for the existence of the bramble is possible if the congestion is relaxed to $8$, although their result is not presented in an algorithmic way.

\myParagraph{Our results and techniques.} Our contributions are twofold.
First, for an even integer $c \geq 2$, we show how to drop the dependence on $c$ on the size of a bramble of congestion $c$ that is used to solve $(k,c)$-\textsc{DDP} in $k$-strong digraphs in the result of Campos et al.~\cite{campos_et_al:LIPIcs.ESA.2023.30}.
Namely, in \autoref{theorem:linkedness-brambles-constant-congestion}  we show that, for $k \geq 1$, if $D$ is a $k$-strong and we are given the bags of a bramble $\mathcal{B}$ of size at least $4k^2 + (2k-1)$, then any instance of $(k, 2\lceil c/2 \rceil)$-\textsc{DDP} is positive and a solution can be found in time $\mathcal{O}(k^4\cdot n^2)$.
This is done by a simple reduction to the $(k,2)$-linked case.
In short, we construct from $\mathcal{B}$ a bramble $\mathcal{B}'$ of congestion $2$ and apply the result of Campos et al.~\cite{campos_et_al:LIPIcs.ESA.2023.30}.
We start from $\mathcal{B}$ and identify the vertices appearing in more than $2$ bags of $\mathcal{B}$.
Then, we make an appropriate number of copies of such vertices, depending on how many bags they appear in, and distribute those copies between the bags of $\mathcal{B}'$ in a way that ensures the desired congestion of $\mathcal{B}'$.
Since $D$ is $k$-strong, the terminals of any instance of $(k,c)$-\textsc{DDP} are ``well connected'' to $\mathcal{B}'$, and thus we can apply \cite[Theorem 16]{campos_et_al:LIPIcs.ESA.2023.30} (also stated as \autoref{proposition:solution_or_small_separator} in this article).

Our second and main contribution considers the result by \masarik et al.~\cite{Masarik2022}, where it is shown that there is an integer $t_k = \mathcal{O}(k^{48} \log^{13}k)$ such that every digraph of directed treewidth at least $t_k$ contains a bramble of size at least $k$ and congestion at most $8$.
We show how to obtain a polynomial-time algorithm from their proof (assuming that a \emph{path system}, as defined in \autoref{sec:preliminaries}, is given) by applying two local changes to it, since many parts of the proof are readily translatable into polynomial-time algorithms.

The local changes that we apply are about finding clique minors in graphs of large average degree, as done by Thomasson~\cite{Thomason1984} and Kostochka~\cite{Kostochka1984} (cf.~\cite[Lemma 2.4]{Masarik2022}), and an application of the \emph{Lovász Local Lemma}, henceforth abbreviated as LLL.
To find the desired clique minor in polynomial time, it suffices to replace the result used in~\cite{Masarik2022} by the algorithmic construction by Dujmovi\'c~\cite[Theorem 1.1]{Dujmovic2013}.

In what follows we sketch the main ingredients of our approach.
The proof of \masarik et al.~\cite{Masarik2022} follows a series of steps refining a previously obtained object into another one, starting from a path system (see \autoref{def:path-system}).
At many stages of the proof, the intersection graph of pairs of sets of linkages is tested for its degeneracy.
In case it is sufficiently large as a function of $k$, the desired bramble is found through an application of the results of~\cite{Thomason1984,Kostochka1984} (or, in our case, the algorithm of~\cite{Dujmovic2013}).
Otherwise, more subtle arguments are needed to distinguish between sparse cases.
In any case, the construction often relies on a classical application of the LLL to find multicoloured independent sets in graphs that have bounded degeneracy.
The original proof of the LLL by Lovász and Erdős~\cite{LovaszErdos1975}, however, is not constructive.
From this point forward, we have two options to obtain an algorithmic version of a result by Reed and Wood~\cite{REED2012374} appearing as~\cite[Lemma 2.5]{Masarik2022}.
In both cases, the running time of the algorithm is split as the sum of two parts, one of which depends only on the chosen approach to deal with the application of the LLL which, in turn, is applied on a graph whose size depends only on $k$ and on the choice of a constant $\alpha \geq 1$. We remark that this dependence is likely to be the dominant one in the running time of the obtained polynomial-time algorithm.

For the first, we can apply a result by Harris~\cite[Theorem 1.5]{Harris2022} to adapt the proof of \cite{Masarik2022} into a polynomial-time algorithm while maintaining the same asymptotic bound for the directed treewidth of the input digraph\footnote{This approach was suggested to us by David G. Harris after seeing the first version of this article on arXiv, containing only the second approach mentioned here. We would like to thank him for pointing out the existence of \autoref{proposition:Harris-independent-set} and for the helpful comments on how to apply it to obtain \autoref{theorem:strongly-connected-implies-linkedness-using-harris}.}.
The downside of this approach is that, according to Harris~\cite{Harris.personal.communicaton.2024}, in this case a precise bound on the running time of the LLL-dependent part of the algorithm cannot be given without significant technical work.
Thus, for the second option we apply an algorithmic version of the LLL by Chandrasekaran et al.~\cite{Chandrasekaran.Navin.Bernhard.09}, which requires more work, to obtain an upper bound on the running time of the  LLL-dependent part of the obtained algorithm.

The usage of the algorithmic version of the LLL by Chandrasekaran et al.~\cite{Chandrasekaran.Navin.Bernhard.09} has an impact on how large the directed treewidth must be to ensure that the bramble can be found.
We show that for every $\alpha > 1$ there is an integer $t'_k = \mathcal{O}\left((k \cdot \log k^2)^{\poly(\alpha)}\right)$ such that every digraph with directed treewidth at least $t'_k$ has a bramble of congestion $8$ and size $k$.
The choice of $\alpha$ has a large impact in the running time of the algorithm: $\alpha$ is directly proportional to $t'_k$ and, up to a certain point, the LLL-dependent algorithm runs faster as it grows (see the discussion at the end of \autoref{subsection:running-time-analysis}).
We remark that our contribution does not require any major changes to the proof by \masarik et al.~\cite{Masarik2022}: we follow the exact same steps of their proof, but adjusting the bounds appropriately for applying our result derived from an algorithmic version of the LLL.

As a direct consequence of our two main contributions, we are able to provide a polynomial function $f(k, \alpha)$ such that every $f(k, \alpha)$-strong digraph $D$ is $(k,8)$-linked.
Compared to the work by Edwards et al.~\cite{Edwards2017} we gain quantitatively by dropping the requirement on the strong connectivity to a polynomial function and lose qualitatively since the allowed congestion increases to $8$ from $2$.
Additionally, as in~\cite{Edwards2017}, given an instance of $(k,c)$-\textsc{DDP} in $D$ a solution can be found in polynomial time.

It is open whether  it is possible to improve the $\mathcal{O}(k^{48} \log^{13}k)$ bound from~\cite{Masarik2022} to obtain brambles of size $k$ and fixed congestion, and whether it is possible to obtain brambles of smaller congestion if a larger dependence on $k$ is allowed.
An improvement on the bound from~\cite{Masarik2022} implies an improvement to $f(k, \alpha)$ defined above.
If a large bramble of even congestion $c$ can be found, then \autoref{theorem:linkedness-brambles-constant-congestion} can be applied without any quality loss on the obtained paths.
If a bramble of congestion at most $6$ is obtainable, then again \autoref{theorem:linkedness-brambles-constant-congestion} can be applied to conclude $(k,c')$-linkedness of $k$-strong digraphs with $c' \leq 6$.

\myParagraph{Organization.}
In \autoref{sec:preliminaries} we give some preliminaries in graphs, digraphs, and parameterized complexity, and formally state our contributions and known results about brambles and linkages.
In \autoref{section:from-large-congestion-to-two} we prove our first contribution.
In \autoref{sec:finding-the-bramble} we introduce the necessary tools to obtain a polynomial-time algorithm from~\cite{Masarik2022}, prove our second contribution, and analyse the running time of the given algorithm.

\section{Preliminaries}\label{sec:preliminaries}
In this section we provide some basic preliminaries on graphs and digraphs, parameterized complexity, brambles, linkages, path systems, and $(k,c)$-linked digraphs.
\subsection{Graphs and digraphs}
We refer the reader to~\cite{Bondy2008} for basic background on graph theory, and recall here only some basic definitions. For a graph $G = (V,E)$, directed or not, and a set $X \subseteq V(G)$, we write $G \setminus X$ for the graph resulting from the deletion of $X$ from $G$.
If $e$ is an edge of a directed or undirected graph with \emph{endpoints} $u$ and $v$, we may refer to $e$ as $(u,v)$.
We also allow for loops and multiple edges.

The \emph{in-degree} $\din_D(v)$ (resp. \emph{out-degree} $\dout_D(v)$) of a vertex $v$ in a digraph $D$ is the number of edges reaching (resp. leaving) $v$ in $D$.
The \emph{in-neighbourhood} $\Nin_D(v)$ of $v$ is the set $\{u \in V(D) \mid (u,v) \in E(G)\}$, and the \emph{out-neighbourhood} $\Nout_D(v)$ of $v$ is the set $\{u \in V(D) \mid (v,u) \in E(G)\}$.
We say that $u$ is an \emph{in-neighbour} of $v$ if $u \in \Nin_D(v)$ and that $u$ is an \emph{out-neighbour} of $v$ if $u \in \Nout_D(v)$.

A \emph{walk} in a digraph $D$ is an alternating sequence $W$ of vertices and edges that starts and ends with a vertex, and such that for every edge $(u,v)$ in the walk, vertex $u$ (resp. vertex $v$) is the element right before (resp. right after) edge $(u,v)$ in $W$.
If the first vertex in a walk is $u$ and the last one is $v$, then we say this is a \emph{walk from $u$ to $v$}.
A \emph{path} is a digraph containing exactly a walk that contains all of its vertices and edges without repetition.
If $P$ is a path with $V(P) = \{v_1, \ldots, v_k\}$ and $E(P) = \{(v_i, v_{i+1}) \mid i \in [k-1]\}$, we say that $v_1$ is the \emph{first} vertex of $P$, that $v_k$ is the \emph{last} vertex of $P$, and for $i \in [k-1]$ we say that $v_{i+1}$ is the \emph{successor in $P$} of $v_{i}$.
All paths mentioned henceforth, unless stated otherwise, are considered to be directed.

A digraph $D$ is \emph{strongly connected} (or simply \emph{strong}) if, for every pair of vertices $u,v \in V(D)$, there is a walk from $u$ to $v$ and a walk from $v$ to $u$ in $D$.
A \emph{separator} of $D$ is a set $S \subsetneq V(D)$ such that $D \setminus S$ is not strongly connected.
If $|V(D)| \geq k+1$ and $k$ is the minimum size of a separator of $D$, we say that $D$ is \emph{$k$-strongly connected}.
A \emph{strong component} of $D$ is a maximal induced subgraph of $D$ that is strongly connected.

By \emph{contracting} an edge $e = (u,v)$ of a simple graph $G$, we generate a simple graph $G'$ starting from $G \setminus \{u,v\}$ then adding a new vertex $v_e$ adjacent too all neighbours of $u$ and $v$ in $G$, excluding $u$ and $v$.
A graph $H$ is a \emph{minor} of $G$ if $H$ can be obtained from $G$ by a sequence of vertex removals, edge removals, and edge contractions.
The definition naturally extends to digraphs.

For a positive integer $k$, a graph $G$ is \emph{$k$-degenerate} if every induced subgraph of $G$ contains  a vertex of degree at most $k$.
The \emph{degeneracy} of a graph $G$ is the least $\ell$ such that $G$ is $\ell$-degenerate.
We say that $G$ is \emph{$k$-partite} if there is a partition of $V(G)$ into $k$ sets such that every edge of $G$ has its endpoints in distinct sets of the partition.
We denote by $[k]$ the set containing every integer $i$ such that $1 \leq i \leq k$.

We also use Menger's Theorem.
\begin{theorem}[{Menger's Theorem}~\cite{Menger1927}] \label{thm:Menger}
Let $G$ be a digraph and $A,B \subseteq V(D)$.
The maximum size of a collection of disjoint $A \to B$ paths is equal to the minimum size of an $(A,B)$-separator.
\end{theorem}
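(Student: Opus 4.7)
The plan is to prove the two inequalities separately. One direction is immediate: given any collection $\mathcal{P}$ of pairwise internally vertex-disjoint $A \to B$ paths, every $(A,B)$-separator $S$ must meet each path in $\mathcal{P}$ (otherwise the untouched path would witness connectivity from $A$ to $B$ in $G \setminus S$). Since the paths are internally disjoint, the contributions are distinct, so $|S| \geq |\mathcal{P}|$. Taking the maximum on the left and the minimum on the right yields $\max \leq \min$.

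For the reverse inequality I would reduce to the classical max-flow min-cut theorem on an auxiliary capacitated digraph $G'$ obtained by vertex splitting. Each vertex $v \in V(G) \setminus (A \cup B)$ is replaced by two copies $v^-, v^+$ joined by an arc $(v^-, v^+)$ of capacity $1$; every arc of $G$ entering $v$ is redirected into $v^-$ and every arc leaving $v$ is redirected out of $v^+$, with all such original arcs assigned infinite capacity. Vertices in $A \cup B$ are left unsplit so that they may freely serve as endpoints. Finally, add a super-source $s$ with infinite-capacity arcs to every vertex of $A$, and a super-sink $t$ with infinite-capacity arcs from every vertex of $B$. Integral $(s,t)$-flows of value $k$ in $G'$ correspond bijectively to families of $k$ pairwise internally disjoint $A \to B$ walks in $G$, which can be pruned to paths by removing loops. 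Symmetrically, any finite $(s,t)$-cut in $G'$ must consist entirely of unit-capacity splitting arcs, and the set of vertices $v$ whose splitting arc lies in the cut forms an $(A,B)$-separator in $G$ of the same size. Invoking max-flow min-cut together with the integrality of maximum flow on integer capacities closes the gap.

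The main subtlety will be verifying the two correspondences used in the reduction, particularly around $A \cup B$: one must check that no minimum cut can profitably use an infinite-capacity arc, that internally disjoint walks correspond to unit flow, and that a minimum cut's associated vertex set is indeed an $(A,B)$-separator of $G$ (not merely a cut in $G'$). This is a routine but slightly delicate case analysis which I expect to be the only real obstacle; once set up correctly, the theorem follows directly from the integral max-flow min-cut theorem.
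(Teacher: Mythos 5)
The paper does not prove this statement at all: it is quoted as a classical result with a citation to Menger, so there is no in-paper argument to compare against. Your proposal must therefore stand on its own, and as written it has a genuine flaw.

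The problem is the decision to leave the vertices of $A \cup B$ unsplit. In this paper ``disjoint'' means pairwise \emph{vertex}-disjoint (endpoints included), and in the set version of Menger's theorem the $(A,B)$-separator is allowed to contain vertices of $A \cup B$; indeed it sometimes must (take $A=\{a\}$, $B=\{b_1,b_2\}$ with arcs $(a,b_1)$ and $(a,b_2)$: the maximum number of disjoint $A \to B$ paths is $1$ and the unique minimum separator is $\{a\}$). Your auxiliary network fails on exactly such instances: with $a$, $b_1$, $b_2$ unsplit, every $s$--$t$ path uses only infinite-capacity arcs, so there is no finite cut and the max-flow/min-cut value is infinite, while both sides of the theorem equal $1$. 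More generally, your correspondence only yields \emph{internally} disjoint paths and only separators avoiding $A \cup B$, which is a different (and here false, as the example shows, without further hypotheses) statement from the one claimed. Your first inequality also silently weakens ``disjoint'' to ``internally disjoint''; for the stated theorem you need full disjointness there so that a separator vertex lying in $A \cup B$ is still counted at most once.

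The repair is standard: split \emph{every} vertex $v$ into $v^-,v^+$ joined by a unit-capacity arc (including the vertices of $A$ and $B$), redirect arcs into $v^-$ and out of $v^+$, attach the super-source to $a^-$ for $a \in A$ and $b^+$ to the super-sink for $b \in B$. Then integral flows of value $k$ decompose into $k$ fully vertex-disjoint $A \to B$ paths, every finite cut may be assumed to consist of splitting arcs, and the underlying vertices of a minimum cut form an $(A,B)$-separator that may legitimately include vertices of $A \cup B$. With that modification, and the routine verifications you already flag, the argument does prove the theorem as stated.
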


\subsection{Parameterized complexity}

We refer the reader to~\cite{Downey2013,Cygan2015} for basic background on parameterized complexity, and we recall here only the definitions used in this article. A \emph{parameterized problem} is a language $L \subseteq \Sigma^* \times \mathbb{N}$.  For an instance $I=(x,k) \in \Sigma^* \times \mathbb{N}$, $k$ is called the \emph{parameter}.

A parameterized problem $L$ is \emph{fixed-parameter tractable} ({\sf FPT}) if there exists an algorithm $\mathcal{A}$, a computable function $f$, and a constant $c$ such that given an instance $I=(x,k)$, $\mathcal{A}$   (called an {\sf FPT} \emph{algorithm}) correctly decides whether $I \in L$ in time bounded by $f(k) \cdot |I|^c$. For instance, the \textsc{Vertex Cover} problem parameterized by the size of the solution is {\sf FPT}.

A parameterized problem $L$ is in {\sf XP} if there exists an algorithm $\mathcal{A}$ and two computable functions $f$ and $g$ such that given an instance $I=(x,k)$, $\mathcal{A}$  (called an {\sf XP} \emph{algorithm}) correctly decides whether $I \in L$ in time bounded by $f(k) \cdot |I|^{g(k)}$. For instance,  the \textsc{Clique} problem parameterized by the size of the solution is in  {\sf XP}.

Within parameterized problems, the class {\sf W}[1] may be seen as the parameterized equivalent to the class {\sf NP} of classical decision problems. Without entering into details (see~\cite{Downey2013,Cygan2015} for the formal definitions), a parameterized problem being {\sf W}[1]-\emph{hard} can be seen as a strong evidence that this problem is {\sl not} {\sf FPT}.
The canonical example of {\sf W}[1]-hard problem is \textsc{Clique}  parameterized by the size of the solution.

\subsection{Brambles, linkages, and path systems}
In this section we define the most fundamental objects studied in this paper.
From this point forward, and unless stated otherwise, we refer to oriented edges only, $D$ will always stand for a digraph, and $G$ for an undirected graph.
Moreover, when $D$ is the input digraph of some algorithm we set $n = |V(D)|$.
\begin{definition}[Brambles in digraphs]\label{def:brambles-digraphs}
A \emph{bramble} $\mathcal{B} = \{B_1, \ldots, B_\ell\}$ in a digraph $D$ is a family of strongly connected subgraphs of $D$ such that if $\{B, B'\} \subseteq \mathcal{B}$ then $V(B) \cap V(B') \neq \emptyset$ or there are edges in $D$ from $V(B)$ to $V(B')$ and from $V(B')$ to $V(B)$.
A \emph{hitting set} of a bramble $\mathcal{B}$ is a set $X \subseteq V(D)$ such that $X \cap V(B) \neq \emptyset$ for all $B \in \mathcal{B}$. The \emph{order}  of a bramble $\mathcal{B}$, denoted by $\order(\mathcal{B})$,  is the minimum size of a hitting set of $\mathcal{B}$.
We also say that the sets $B_i \in \mathcal{B}$ are the \emph{bags} of $\mathcal{B}$.
For an integer $c \geq 1$ we say that $\mathcal{B}$ has \emph{congestion $c$} if every vertex of $D$ occurs in at most $c$ bags of the bramble.
\end{definition}
Clearly every bramble $\mathcal{B}$ of congestion $c$ has order at least $\lceil |\mathcal{B}|/c \rceil$ since, by definition, every vertex of a hitting set of $\mathcal{B}$ can hit at most $c$ bags of $\mathcal{B}$.

Originally, the result of Campos et al.~\cite{Campos2022} builds a \emph{haven} of order $k$ in digraphs of directed treewidth at least $k$.
Since one can easily construct a bramble of order $\lfloor k/2\rfloor$ from a haven of order $k$ and generate a haven of order $k$ from a bramble of order $k$ (see~\cite[Chapter 6]{Matthias2014} for example), we refrain from defining havens and state the next proposition with relation to brambles only, noting that the original statement refers to havens.
We also do not need the definition of directed treewidth and thus it is also skipped.
The aforementioned \XP algorithm~\cite{Johnson2001} was improved to an {\sf FPT} algorithm, with the same approximation ratio, by Campos et al.~\cite{Campos2022}.
We remark that a sketch of a proof of an \FPT algorithm is also available in \cite[Theorem 9.4.4]{bang-jensen2018classes}, but with a slightly worse approximation ratio.
\begin{proposition}[Campos et al.~\cite{Campos2022}, bramble version]
\label{proposition:directed-treewidth-bramble}
Let $D$ be a digraph and $k$ be a non-negative integer.
There is an algorithm running in time $2^{\Ocal(k \log k)}\cdot n^{\Ocal(1)}$ that correctly decides that $D$ has directed treewidth at most $3k-2$ or outputs a bramble of order $k$ in $D$.
\end{proposition}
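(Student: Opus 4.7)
The plan is to reduce to the analogous result for \emph{havens}, which is the original formulation of the result of Campos et al.~\cite{Campos2022}, and then apply the haven-to-bramble conversion recalled in the paragraph preceding the statement. Concretely, \cite{Campos2022} establishes an FPT refinement, with the same approximation ratio, of the XP approximation of Johnson et al.~\cite{Johnson2001}: there is an algorithm running in time $2^{\mathcal{O}(k \log k)} \cdot n^{\mathcal{O}(1)}$ that either certifies $\dtw(D) \leq 3k-2$ or produces a haven of order $k$ in $D$. If the algorithm returns a decomposition of width at most $3k-2$, we return that certificate and stop. Otherwise, we apply the polynomial-time haven-to-bramble transformation referenced in~\cite[Chapter 6]{Matthias2014}, invoking the subroutine with a slightly larger parameter (at most $2k$) whenever the conversion being used incurs the factor-of-two loss in order mentioned in the preliminaries, so that the resulting bramble is guaranteed to have order at least $k$.

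The haven-to-bramble step itself is classical: given a haven $\beta$, one inspects the strong components $\beta(X)$ for suitable sets $X$ of size less than the order of the haven and assembles a family of subdigraphs that is pairwise touching — hence forms a bramble — and whose joint hitting number inherits the lower bound from the defining property of the haven. Both the algorithmic call and this combinatorial packaging run well within the claimed time budget $2^{\mathcal{O}(k \log k)} \cdot n^{\mathcal{O}(1)}$.

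Consequently, the only real technical content of this proposition lies in the underlying FPT approximation algorithm of~\cite{Campos2022}, which I would cite as a black box; reproducing its proof — an FPT upgrade of the iterative separator-refinement argument of Johnson et al. — would be the actual obstacle and is outside the scope of this sketch. From the viewpoint of the present statement, everything else reduces to invoking that algorithm and, if necessary, translating its haven output into a bramble in polynomial time.
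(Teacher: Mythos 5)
Your proposal matches the paper's treatment: the paper does not reprove this result either, but simply cites the FPT haven-approximation algorithm of Campos et al.\ as a black box and remarks that a bramble is obtained from the haven via the standard haven-to-bramble conversion (losing a factor of two in the order), which is exactly your route. The only caveat --- present in the paper's own loose restatement as well --- is that compensating for that factor-two loss by invoking the subroutine with parameter about $2k$ weakens the width bound in the negative branch from $3k-2$ to roughly $6k-2$, a constant-factor slack that is immaterial for how the proposition is used downstream.
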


Both constructions of brambles with constant congestion from Edwards et al.~\cite{Edwards2017} and  Masa\v{r}\'{\i}k et al.~\cite{Masarik2022} start from a highly connected system of paths.
In order to formally define this system, we adopt the following notation.
\begin{definition}[Linkages]
Let $D$ be a digraph and $A,B \subseteq V(D)$ with $A \neq B$.
A \emph{linkage from $A$ to $B$} in $D$, or an \emph{$(A, B)$-linkage}, is a set of of pairwise vertex-disjoint paths from $A$ to $B$.
\end{definition}

As is this case for brambles, \emph{well-linked sets} are also known obstructions for bounded directed treewidth and are also involved in the definition of a path system.
\begin{definition}[Well-linked sets]\label{def:well-linked-sets-in-digraphs}
Let $D$ be a digraph and $A \subseteq V(D)$.
We say that $A$ is \emph{well-linked} in $D$ if, for every ordered pair  $(X,Y)$ of disjoint sets $X,Y \subseteq A$ with $|X| = |Y|$, there is an $(A,B)$-linkage of size $|X|$ in $D$ (not necessarily contained in $A$).
The \emph{order} of a well-linked set $A$ is $|A|$.
We denote by $\wlink(D)$ the size of a largest well-linked set in $D$.
\end{definition}

\begin{definition}[Path system]\label{def:path-system}
Let $D$ be a digraph and $a,b$ be two positive integers.
An \emph{$(a,b)$-path system} is a tuple $\mathcal{S}$ with $\mathcal{S} = (\mathcal{P}, \mathcal{L}, \mathcal{A})$ where
\begin{itemize}
\item $\mathcal{A} = \{A_i^{\textsf{\emph{in}}}, A_i^{\textsf{\emph{out}}} \mid i \in [a]\}$ where each $A_i^{\textsf{\emph{in}}}$ and each $A_i^{\textsf{\emph{out}}}$ is a well-linked set of order $b$;

\item $\mathcal{L}$ is a collection $\{L_{i,j} \mid i,j \in [a] \text{ with } i \neq j\}$ of linkages where each $L_{i,j}$ is a linkage of size $b$ from $A_i^{\textsf{\emph{out}}}$ to $A_j^{\textsf{\emph{in}}}$; and

\item $\mathcal{P}$ is a sequence $P_1, \ldots, P_a$ of pairwise vertex-disjoint paths such that, for all $i \in [a]$, $V(P_i) \supseteq A_i^{\textsf{\emph{in}}} \cup A_i^{\textsf{\emph{out}}}$ and every vertex in $A_i^{\textsf{\emph{in}}}$ appears in $P_i$ before any vertex of $A_i^{\textsf{\emph{out}}}$.

\end{itemize}
\end{definition}
Although there is a lot to unpack in the definition of path systems, it is not hard to visualize; see \autoref{fig:path_system} for an illustration.

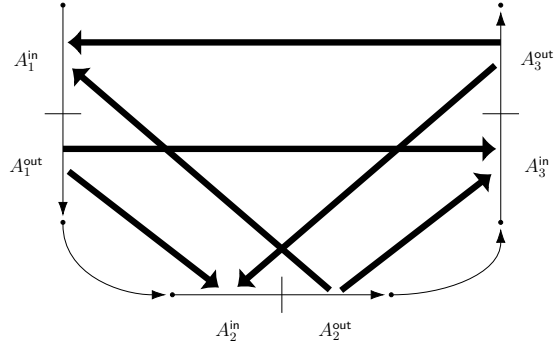
\begin{figure}[h!]
  \centering
  \scalebox{.6}{
  \begin{tikzpicture}[scale=.8]

  \node[blackvertex, scale=.5] (u1) at (0,0) {};
  \node[blackvertex, scale=.5] (u0) at ($(u1) + (0,6)$) {};

  \node[blackvertex, scale=.5] (v0) at (12,0) {};
  \node[blackvertex, scale=.5] (v1) at ($(v0) + (0,6)$) {};

  \node[blackvertex, scale=.5] (w0) at (3,-2) {};
  \node[blackvertex, scale=.5] (w1) at ($(w0) + (6,0)$) {};

  \draw[-{Latex[length=3mm, width=2mm]}, shorten >= .1cm] (u0) -- (u1) node [midway] (um) {} node [pos = .25] (u0um) {} node [pos = .75] (umu1) {};
  \draw ($(um) + (-0.5, 0)$) -- ($(um) + (0.5, 0)$);
  \node (Ain) at ($(u0um) + (-1, 0)$) {\Large$A_1^{\textsf{in}}$};
  \node (Ain) at ($(umu1) + (-1, 0)$) {\Large$A_1^{\textsf{out}}$};

  \draw[-{Latex[length=3mm, width=2mm]}, shorten >= .1cm] (v0) -- (v1) node [midway] (vm) {} node [pos = .25] (v0vm) {} node [pos = .75] (vmv1) {};
  \draw ($(vm) + (-0.5, 0)$) -- ($(vm) + (0.5, 0)$);
  \node (Ain) at ($(v0vm) + (1, 0)$) {\Large$A_3^{\textsf{in}}$};
  \node (Ain) at ($(vmv1) + (1, 0)$) {\Large$A_3^{\textsf{out}}$};

  \draw[-{Latex[length=3mm, width=2mm]}, shorten >= .1cm] (w0) -- (w1) node [midway] (wm) {} node [pos = .25] (w0wm) {} node [pos = .75] (wmw1) {};
  \draw ($(wm) + (0, -0.5)$) -- ($(wm) + (0, 0.5)$);
  \node (Ain) at ($(w0wm) + (0, -1)$) {\Large$A_2^{\textsf{in}}$};
  \node (Ain) at ($(wmw1) + (0, -1)$) {\Large$A_2^{\textsf{out}}$};

  \draw[-{Latex[length=3mm, width=6mm]}, line width=4, shorten >= .1cm] ($(umu1) + (0,.5)$) -- ($(v0vm) + (0, .5)$);
  \draw[-{Latex[length=3mm, width=6mm]}, line width=4, shorten >= .1cm] (umu1) -- (w0wm);

  \draw[-{Latex[length=3mm, width=6mm]}, line width=4, shorten >= .1cm] ($(vmv1) + (0, .5)$) -- ($(u0um) + (0,.5)$);
  \draw[-{Latex[length=3mm, width=6mm]}, line width=4, shorten >= .1cm] (vmv1) -- (w0wm);

  \draw[-{Latex[length=3mm, width=6mm]}, line width=4, shorten >= .1cm] (wmw1) -- (u0um);
  \draw[-{Latex[length=3mm, width=6mm]}, line width=4, shorten >= .1cm] (wmw1) -- (v0vm);

  \draw[-{Latex[length=3mm, width=2mm]}, shorten >= .1cm] (u1) to [out = -90, in = 180] (w0);
  \draw[-{Latex[length=3mm, width=2mm]}, shorten >= .1cm] (w1) to [out = 0, in = -90] (v0);

  \end{tikzpicture}%
  }
  \caption{An $(a,b)$-path system with $a =3$. A thick edge denotes a linkage of size $b$ from a set $A_i^{\text{out}}$ to a set $A_j^{\text{in}}$, with $i \neq j$.}
  \label{fig:path_system}
\end{figure}

As shown by Kawarabayashi and Kreutzer~\cite{Kawarabayashi2015}, it is easy to construct, in polynomial time, a large path system from a path $P$ intersecting all elements of a bramble $\mathcal{B}$ of order roughly $k^2$ and a well-linked set $A \subseteq V(P)$ with $|A| \geq k$.
Finding $P, \mathcal{B}$, and $A$ in a digraph with large directed treewidth is no trivial matter.
The first algorithm that does so is (implicitly) given in~\cite{Kawarabayashi2015} and runs in \XP time parameterized by $k$.
This was later improved in~\cite[Theorem 2.23]{Campos2022} to an algorithm in time $2^{\Ocal(k^2 \log k)}\cdot n^{\Ocal(1)}$, and a better bound on $\order(\mathcal{B})$ is also given.
Thus plugging together the results from~\cite{Kawarabayashi2015} and~\cite{Campos2022} we obtain the following.

\begin{proposition}\label{proposition:campos-dtw-to-path-system}
Let $a,b$ be non-negative integers.
There is a constant $c$ such that every digraph $D$ with $\dtw(D) \geq c\cdot a^2 \cdot b^2$ contains an $(a,b)$-path system.
Additionally, such a path system can be constructed in {\sf FPT} time parameterized by $\max\{a,b\}$.
\end{proposition}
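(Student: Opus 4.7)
\medskip\noindent\textbf{Proof proposal.} The plan is to chain together three ingredients in the order in which they appear in the excerpt: (i) obtain a bramble of order roughly $a^{2}b^{2}$ from the directed treewidth hypothesis via \autoref{proposition:directed-treewidth-bramble}; (ii) refine this bramble into a single path $P$ hitting every bag of a sub-bramble together with a large well-linked set $A\subseteq V(P)$, using the Kawarabayashi--Kreutzer construction as sharpened by Campos et al.; and (iii) carve the path system out of $(P,A)$ by a purely combinatorial partition.

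First, I would set $k = c_{0}\cdot a^{2}b^{2}$ for a suitable absolute constant $c_{0}$ (to be fixed so that the bounds in the cited results go through), apply \autoref{proposition:directed-treewidth-bramble} to $D$, and obtain a bramble $\mathcal{B}$ of order at least $k$; because $\dtw(D)$ is large this branch of the dichotomy is the one that triggers. Next, invoking the Kawarabayashi--Kreutzer-style extraction (in the algorithmic form of~\cite[Theorem 2.23]{Campos2022}) I obtain a directed path $P$ that intersects every bag of a sub-bramble and a well-linked set $A\subseteq V(P)$ with $|A|\geq 2ab$, computed in {\sf FPT} time parameterised by $\max\{a,b\}$.

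Having $P$ and $A$ in hand, the construction of the path system is essentially bookkeeping. Order the vertices of $A$ according to their appearance along $P$, and split this ordered sequence into $2a$ consecutive blocks of size $b$, labelling them
\[
A_{1}^{\textsf{in}},\,A_{1}^{\textsf{out}},\,A_{2}^{\textsf{in}},\,A_{2}^{\textsf{out}},\ldots,A_{a}^{\textsf{in}},\,A_{a}^{\textsf{out}}
\]
in that order along $P$. Define $P_{i}$ as the subpath of $P$ starting at the first vertex of $A_{i}^{\textsf{in}}$ and ending at the last vertex of $A_{i}^{\textsf{out}}$; these subpaths are pairwise vertex-disjoint (they are subpaths of $P$), every vertex of $A_{i}^{\textsf{in}}$ precedes every vertex of $A_{i}^{\textsf{out}}$ on $P_{i}$ by construction, and each $A_{i}^{\textsf{in}}$, $A_{i}^{\textsf{out}}$ inherits well-linkedness from $A$ (an $(X,Y)$-linkage witnessing well-linkedness of a subset $X,Y\subseteq A_{i}^{\star}\subseteq A$ exists because $A$ is well-linked, directly from \autoref{def:well-linked-sets-in-digraphs}). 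Finally, for every ordered pair $i\neq j$, the sets $A_{i}^{\textsf{out}}$ and $A_{j}^{\textsf{in}}$ are disjoint subsets of $A$ of equal size~$b$, so well-linkedness of $A$ yields a linkage $L_{i,j}$ of size $b$ from $A_{i}^{\textsf{out}}$ to $A_{j}^{\textsf{in}}$; algorithmically this is a single application of Menger's theorem (\autoref{thm:Menger}) and hence runs in polynomial time. Collecting everything gives the triple $(\mathcal{P},\mathcal{L},\mathcal{A})$.

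The main obstacle is step~(ii): producing a single path $P$ that simultaneously (a) hits enough bags of a high-order bramble and (b) carries a well-linked set of size $\geq 2ab$ inside its vertex set, while keeping both the size loss (quadratic in $a$ and $b$) and the running time under control. This is exactly the content of the Kawarabayashi--Kreutzer path-extraction argument together with the {\sf FPT} upgrade in~\cite[Theorem 2.23]{Campos2022}, so the burden of the proof is to verify that the parameters line up---namely, that $c_{0}$ can be chosen so that the bramble produced by \autoref{proposition:directed-treewidth-bramble} is large enough to feed into the extraction result and yield a well-linked set on $P$ of size at least $2ab$. Once this bookkeeping is in place, the partitioning step is immediate and the overall running time is dominated by the two invocations above, both of which are {\sf FPT} in $\max\{a,b\}$.
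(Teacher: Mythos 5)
Your proposal follows essentially the same route as the paper, which obtains this proposition by plugging together the Kawarabayashi--Kreutzer construction of a path system from a path $P$ hitting all bags of a bramble of order roughly $k^2$ together with a well-linked set $A \subseteq V(P)$, and the {\sf FPT} extraction of $P$, $\mathcal{B}$, and $A$ from large directed treewidth in~\cite[Theorem 2.23]{Campos2022}; your explicit bookkeeping (splitting $A$ into $2a$ consecutive blocks of size $b$ along $P$, taking subpaths, and getting the linkages $L_{i,j}$ from well-linkedness via Menger) is exactly the ``easy'' construction the paper alludes to. The parameter matching ($|A| \geq 2ab$ from a bramble of order $\Theta(a^2b^2)$) and the inheritance of well-linkedness by subsets are both handled correctly, so the argument is sound.
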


\subsection{Constant congestion brambles and $(k,c)$-linked digraphs}

Edwards et al.~\cite[Theorem 14]{Edwards2017}\footnote{The precise statement considers that a path $P$ and a well-linked set $A \subseteq V(P)$ are given, and the first step of their proof is to build a path system from those.} showed how to construct in polynomial time a bramble of congestion $2$ and size $k$ from a $(g(k), g'(k))$-path system, where both $g(k)$ and $g'(k)$ are exponential towers.
Plugging together this result, the aforementioned result by Campos et al.~\cite[Theorem 2.23]{Campos2022}, and \autoref{proposition:directed-treewidth-bramble}, we conclude that in \FPT time with parameter $k$ one can find a bramble of congestion $2$ and size at least $k$ in digraphs with directed treewidth at least $f(k)$, albeit with an exponential bound on $f(k)$.
Masa\v{r}\'{\i}k et al.~\cite{Masarik2022} showed the bound on $f(k)$ can be made polynomial if the congestion on the bramble is relaxed to $8$, but no algorithm executing this construction was given.
\begin{proposition}[Masa\v{r}\'{\i}k et al.~\cite{Masarik2022}]\label{proposition:bramble-masarik-congestion-8}
For every integer $k \geq 1$ there is an integer $t = \Ocal(k^{48} \log^{13}k)$ such that every digraph of directed treewidth at least $t$ contains a bramble of congestion $8$ and size at least $k$.
\end{proposition}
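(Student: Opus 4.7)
The plan is to follow a dichotomous refinement strategy that turns the directed-treewidth lower bound into structured objects of decreasing complexity, culminating in a bramble of congestion~$8$. First I would invoke \autoref{proposition:campos-dtw-to-path-system} with parameters $a,b$ chosen polynomially in $k$, extracting from $D$ an $(a,b)$-path system $\mathcal{S}=(\mathcal{P},\mathcal{L},\mathcal{A})$. The eventual bags of the bramble will be assembled from sub-segments of the paths $P_i\in\mathcal{P}$ stitched together by carefully chosen sub-linkages of the $L_{i,j}\in\mathcal{L}$; two bags sitting on different $P_i$'s will witness strong connectivity between themselves through the $L_{i,j}$ and $L_{j,i}$ linkages. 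The task therefore reduces to selecting, out of the large but potentially tangled family $\mathcal{L}$, a sub-collection whose internal intersections can be absorbed into a congestion budget of $8$.

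To perform this selection I would repeatedly analyse an auxiliary intersection graph $H$ whose vertices are (sub-)linkages of $\mathcal{L}$ and whose edges record pairs that share many vertices in $D$. The core of the argument is a case split on the degeneracy of $H$. In the \emph{dense case}, where $H$ has large average degree as a function of $k$, I would apply the algorithmic Thomason/Kostochka-style bound of Dujmovi\'c~\cite[Theorem 1.1]{Dujmovic2013} to exhibit a $K_{k'}$-minor with $k'\geq k$; each branch set of the minor corresponds to a strongly connected union of path segments and linkages of $D$, and the edges of the minor translate directly into the mutual reachability property required by \autoref{def:brambles-digraphs}, so the bramble falls out essentially for free. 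In the \emph{sparse case}, where $H$ has small degeneracy, I would apply the Reed--Wood-type consequence of the \emph{Lovász Local Lemma} (the ancestor of~\cite[Lemma 2.5]{Masarik2022}) to extract a large multicoloured independent set, the colours being indexed by the source paths $P_i$. This independent set yields a family of linkages whose pairwise overlap is tightly controlled, and the resulting bags then charge each vertex of $D$ at most $8$ times across the four natural layers of the construction (path segments, in-linkages, out-linkages, and anchor sets).

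I would iterate the dichotomy: each successful application of the sparse case refines $\mathcal{S}$ into a cleaner path system on fewer but better-behaved linkages, while each dense case terminates the argument outright. Tracking the parameter evolution, the dense branch costs only polynomial factors through Dujmovi\'c's bound, but the sparse branch pays a substantial polynomial blow-up because the LLL-based selection only guarantees a constant fraction of the expected independent set. Composing the losses across the (constant number of) refinement stages, and solving backwards for the $(a,b)$-path system needed to seed the process, gives the advertised $t = \Ocal(k^{48}\log^{13}k)$, the large exponent being the accumulated product of the per-stage polynomial losses.

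The main obstacle, I expect, is the sparse case: one must guarantee that an LLL application with a very constrained event structure (events indexed by the many possible conflicts between linkages and anchor sets) simultaneously produces an independent set that is \emph{both} multicoloured and large enough, without blowing past congestion~$8$ when the selected linkages are glued to the path segments. This is precisely where the interplay between the two parameters $a$ and $b$ of the initial path system is delicate: $a$ must be large enough to leave room after pruning by the LLL, while $b$ must be large enough that the anchor-set well-linkedness survives the removal of colliding paths. Getting the LLL dependency graph sufficiently sparse to satisfy the local-lemma condition, while keeping the parameters polynomial in $k$, is what forces the exponent~$48$ and the polylogarithmic factor in the final bound.
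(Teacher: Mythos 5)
Your high-level skeleton matches the actual argument of \masarik et al.\ as sketched in \autoref{sec:algorithmic-construction}: seed the construction with an $(a,b)$-path system, run a dense/sparse dichotomy on the degeneracy of intersection graphs, use a Thomason/Kostochka-type clique-minor extraction in the dense branch, and a Reed--Wood/LLL multicoloured independent set in the sparse branch. However, there is a genuine gap at the heart of your dense case. You take the intersection graph of (sub-)linkages and claim that the branch sets of a $K_{k'}$-minor ``correspond to a strongly connected union of path segments and linkages,'' so that the bramble ``falls out essentially for free.'' It does not: a linkage is a family of vertex-disjoint directed paths, and a union of pairwise intersecting linkages is in general \emph{not} strongly connected, so such branch sets do not satisfy \autoref{def:brambles-digraphs}. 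The actual proof first converts the linkages of the path system into \emph{closed walks} of bounded congestion (via threaded linkages, the untangling step, and the Bowtie Lemma), and only the intersection graph of these strongly connected objects admits the dense winning scenario (\autoref{lemma:dense-winning-scenario}); this conversion, together with the matching-based three-case analysis over two degeneracy thresholds $d_1\geq d_2$ and the set $Z$, is where most of the work and the nested parameter choices $b\gg d_1\gg d_2\gg d_3$ come from, and it is exactly what your proposal omits.

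Two smaller inaccuracies compound this. First, your congestion accounting (``at most $8$ across the four natural layers'') is asserted rather than derived; in the real proof the congestion arises as $2+2\alpha$ from the sparse winning scenario applied to a walk/path family of congestion $\alpha$ (\autoref{lemma:sparse-winning-scenario}), giving $4$ or $6$ in some branches and $8$ only in the worst case, and this bound cannot be verified without the closed-walk machinery above. Second, the LLL step does not lose ``a constant fraction of the expected independent set'': it returns a full multicoloured independent set, one vertex per colour class, and the polynomial blow-up in $t=\Ocal(k^{48}\log^{13}k)$ comes instead from the requirement that the class size dominate (a constant times) the product of the number of classes and the degeneracy, which forces the cascading choices of $b,d_1,d_2,d_3$ in terms of $a$ and $k$. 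Your ``iterate the dichotomy'' narrative is a reasonable moral summary of this cascade, but as written it does not reconstruct the bound or the congestion guarantee.
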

The proof of \autoref{proposition:bramble-masarik-congestion-8} skips directed treewidth and assumes that an $(a,b)$-path system is given, where both $a$ and $b$ depend polynomially on $k$ (see \autoref{proposition:campos-dtw-to-path-system}).
\autoref{proposition:bramble-masarik-congestion-8} can be adapted into a polynomial-time algorithm by applying the  result by Harris~\cite[Theorem 1.5]{Harris2022} (cf. \autoref{proposition:Harris-independent-set}) without any loss on the asymptotic bound of $t_k$.
However, according to the author~\cite{Harris.personal.communicaton.2024}, significant technical work is needed to precise the running time of the algorithm.

To provide a version of \autoref{proposition:bramble-masarik-congestion-8} with a bound on the running time, in \autoref{sec:finding-the-bramble} we show how to adapt their construction into a polynomial-time algorithm that finds a bramble of congestion $8$ and size at least $k$, assuming that an $(a,b)$-path system is given with $a$ and $b$ also depending polynomially on $k$, but with a worse bound than that of \autoref{proposition:bramble-masarik-congestion-8}.
This loss is mainly due to an application of a polynomial-time and deterministic algorithm for the Lovász Local Lemma.
More precisely, we obtain the following.
\begin{restatable}{theorem}{congestionBramblePolyTime}\label{theorem:constant-congestion-brambles-poly-time}
For every integer $k \geq 1$ there are integers $t$ and $\ell$ depending only polynomially on $k$ such that every digraph $D$ containing a $(t, \ell)$-path system contains  a bramble of size $t$ and congestion $8$.
Moreover, a bramble with the desired properties can be found in polynomial time assuming that the $(t, \ell)$-path system is given.
\end{restatable}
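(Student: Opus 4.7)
The plan is to follow, essentially verbatim, the chain of refinements in the proof of \autoref{proposition:bramble-masarik-congestion-8} by \masarik et al.~\cite{Masarik2022}, replacing each non-constructive ingredient with a polynomial-time substitute while re-checking that the resulting bounds on $t$ and $\ell$ remain polynomial in $k$. Given the $(t,\ell)$-path system $\mathcal{S}=(\mathcal{P},\mathcal{L},\mathcal{A})$, their proof proceeds in stages: at each stage one inspects the intersection graph of some collection of pairs of linkages; if this graph is sufficiently dense then a clique minor (yielding the desired bramble directly) is extracted, and otherwise one passes, via a colourful-independent-set argument based on the LLL, to a cleaner sub-system on which the next stage operates. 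The bramble of size $t$ and congestion $8$ falls out at the end of the chain.

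The only two steps at which constructivity is lost are (i) the extraction of a clique minor from a graph of large average degree, invoked via Kostochka/Thomason in \cite[Lemma 2.4]{Masarik2022}, and (ii) the application of the Reed--Wood lemma \cite[Lemma 2.5]{Masarik2022}, which itself rests on a non-constructive LLL. For (i) I would simply plug in Dujmovi\'c's polynomial-time minor-extraction algorithm~\cite[Theorem 1.1]{Dujmovic2013}, which has the same asymptotic guarantee and so preserves the $\poly(k)$ thresholds appearing in \cite{Masarik2022}. For (ii) I would use the deterministic algorithmic LLL of Chandrasekaran et al.~\cite{Chandrasekaran.Navin.Bernhard.09} to derive an algorithmic analogue of the Reed--Wood lemma: given a $d$-degenerate graph properly coloured with sufficiently many colours, the algorithm produces in polynomial time a colourful independent set. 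The caveat is that the hypothesis of \cite{Chandrasekaran.Navin.Bernhard.09} is slightly stronger than that of the classical LLL, so the number of colours (hence the relevant parameter of the sub-system produced at that stage) must be inflated by a factor depending on a constant $\alpha>1$.

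With these two replacements in hand, the proof goes through unchanged: every other ingredient in \cite{Masarik2022} (shortening paths inside bags, rerouting through linkages, identifying well-linked subsets, greedy degeneracy orderings, Menger-type separator computations by \autoref{thm:Menger}) is already a polynomial-time operation on the input digraph. The only bookkeeping is to propagate the $\alpha$-dependent inflation of colours through each refinement step and verify that the sizes of the well-linked sets and linkages required from $\mathcal{S}$ at every stage remain polynomial in $k$; this is essentially the same inductive computation as in \cite{Masarik2022} with slightly worse constants, and so yields polynomial values of $t$ and $\ell$.

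The main obstacle will be the careful bookkeeping in step (ii): making sure that after the $\alpha$-dependent inflation the linkages and well-linked sets produced at a given stage are still large enough to feed the hypotheses of the next stage, and that the final stage still delivers a bramble of size at least $t$ rather than $t$ divided by some factor. I expect this to force $t$ and $\ell$ to be polynomials in $k$ of noticeably higher degree than the $\Ocal(k^{48}\log^{13}k)$ bound of \autoref{proposition:bramble-masarik-congestion-8}, but still polynomial; the explicit degree, together with the running time of the resulting algorithm, will be extracted in \autoref{subsection:running-time-analysis}.
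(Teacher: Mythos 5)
Your proposal matches the paper's proof essentially verbatim: it makes exactly the same two local replacements (Dujmovi\'c's algorithmic clique-minor extraction for Thomason/Kostochka, and the deterministic LLL of Chandrasekaran et al.\ in place of the non-constructive Reed--Wood lemma, with the same $\alpha>1$ inflation of the thresholds), and then propagates the adjusted bounds through the stages of \masarik et al.\ to get polynomial $t$ and $\ell$. This is precisely the strategy carried out in \autoref{sec:algorithmic-construction}, so no further comparison is needed.
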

The values of $t$ and $\ell$ will be precised later. A statement in the same shape of \autoref{proposition:bramble-masarik-congestion-8} is given in \autoref{theorem:strongly-connected-implies-linkedness}.
The requirement of \autoref{theorem:constant-congestion-brambles-poly-time} is not much different than the requirement of \autoref{proposition:bramble-masarik-congestion-8} since in their proof the authors use the directed treewidth only to obtain a path system.

We remark that it is easy to construct path systems in strongly connected digraphs.
For completeness, we include a proof of this statement.
\begin{proposition}\label{proposition:path-systems-strong-digraphs}
Let $k\geq 1$ be an integer and $D$ be a $2k^2$-strong digraph.
There is an algorithm running in time $\Ocal(n^2)$ that finds a $(k,k)$-path system in $D$.
\end{proposition}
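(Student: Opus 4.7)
The plan is to construct a $(k,k)$-path system directly, with the $2k^2$-strong hypothesis tailored so that every ingredient reduces to Menger's theorem (\autoref{thm:Menger}).

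First, I extract a single long path. Since $D$ is $2k^2$-strong we have $|V(D)|\geq 2k^2+1$, and every vertex $v$ has at least $2k^2$ distinct out-neighbours different from $v$: otherwise that set of out-neighbours would be a separator of $D$ of size strictly less than $2k^2$, because after removing it $v$ still lies in the digraph together with at least one other vertex that it can no longer reach. Starting from an arbitrary vertex and greedily appending a previously unvisited distinct out-neighbour yields a maximal path $P=v_0v_1\cdots v_\ell$ whose endpoint $v_\ell$ has all of its distinct out-neighbours in $\{v_0,\ldots,v_{\ell-1}\}$, which forces $\ell\geq 2k^2$, and hence $|V(P)|\geq 2k^2+1$. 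This runs in $\Ocal(n+m)$ time by a single greedy traversal.

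Next, I cut $P$ at $k-1$ places so as to obtain $k$ pairwise vertex-disjoint subpaths $P_1,\ldots,P_k$, each with at least $2k$ vertices (possible because $|V(P)|\geq k\cdot 2k+1$). For each $i\in[k]$ I define $A_i^{\textsf{\emph{in}}}$ to be the first $k$ vertices of $P_i$ in path order and $A_i^{\textsf{\emph{out}}}$ the last $k$; the $2k$ sets so produced are pairwise disjoint. Each of them is well-linked in $D$: for any disjoint $X,Y$ of common size $s$ contained in such a set, every $(X,Y)$-separator $S$ has $|S|\geq s$, because if $|S|<2k^2$ then $D\setminus S$ is strongly connected and immediately joins the nonempty sets $X\setminus S$ and $Y\setminus S$, while otherwise $|S|\geq 2k^2\geq s$; Menger's theorem then supplies the required $(X,Y)$-linkage. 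The same argument, applied to the disjoint size-$k$ sets $A_i^{\textsf{\emph{out}}}$ and $A_j^{\textsf{\emph{in}}}$ for $i\neq j$, produces each linkage $L_{i,j}$.

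There is no genuine obstacle: the strong connectivity hypothesis is picked exactly so that both the long path and all the required linkages fall out of Menger. For the running time, each of the $k(k-1)$ linkages is computed by $k$ BFS-based augmenting-path rounds in $\Ocal(k(n+m))$ time; summing with the $\Ocal(n+m)$ cost of producing $P$ and absorbing the dependence on $k$ into the hidden constant yields the claimed $\Ocal(n^2)$ total running time.
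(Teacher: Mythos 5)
Your proposal is correct and follows essentially the same route as the paper: take an inclusion-wise maximal path, which has at least $2k^2$ vertices by the $2k^2$-strong hypothesis, split it into $k$ subpaths of $2k$ vertices, let $A_i^{\textsf{in}}$ and $A_i^{\textsf{out}}$ be the first and last $k$ vertices of each, and invoke strong connectivity (via Menger) for well-linkedness and the linkages $L_{i,j}$. You merely spell out the Menger-based separator argument and the explicit flow computation of the linkages, which the paper's proof leaves implicit.
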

\begin{proof}
Since $D$ is $2k^2$-strongly connected we know that $\dout(v) \geq 2k^2$ holds for all $v \in V(D)$.
Thus every inclusion-wise vertex maximal path $P$ of $D$ has $|V(P)| \geq 2k^2$.
Indeed, otherwise the last vertex $v$ of $P$ has out-neighbours only in $V(P)$ and this contradicts the fact that $\dout(v) \geq 2k^2$.

Assume now that $|V(P)| = 2k^2$.
We split $P$ into $k$ vertex-disjoint subpaths $P_1, P_2, \ldots, P_k$, appearing in $P$ in this order and all with size $2k$.
For $i \in [k]$ define $A^{\text{in}}_i$ to be the set formed by the first $k$ vertices of $P_i$ and $A^{\text{out}}_i$ to be formed by the last $k$ vertices of $P_i$.
Since $D$ is $2k^2$-strongly connected, each $A^{\text{in}}_i$ and each $A^{\text{out}}_i$ is well-linked and there is a linkage of size $k$ from each $A^{\text{out}}_i$ to each $A^{\text{in}}_j$ with $i \neq j$.

Clearly the algorithm runs in time $\Ocal(n^2)$ since $P$ can be constructed in time $\Ocal(n^2)$, and the result follows.
\end{proof}

We now discuss the relation between constant congestion brambles and linkages in digraphs.
\begin{definition}[$(k,c)$-linked digraphs]\label{def:k-linked-digraphs}
We say that a digraph $D$ is \emph{$(k,c)$-linked} if for every disjoint pair of ordered sets $S,T \subseteq V(D)$ with $S = \{s_i \mid i \in [k]\}$, and $T = \{t_i \mid i \in [k]\}$ there is a collection of paths $\{P_i \mid i \in [k]\}$ where $P_i$ is a path from $s_i$ to $t_i$ in $D$ and such that each vertex of $D$ appears in at most $c$ paths of the collection.
\end{definition}
Notice that $(k,1)$-linkedness is a property stronger than being $k$-strongly connected: the latter only cares about $k$ vertex-disjoint paths from $S$ to $T$, while the former also asks the paths to start and end on prescribed vertices of $S$ and $T$.
In other words, if $D$ is $(k,c)$-linked then every instance of \ddpc{$(k,c)$} in $D$ is positive (although finding a solution may still be a non-trivial task).
Edwards et al.~\cite{Edwards2017} showed that every $(36k^3 + 2k)$-strong digraph containing a bramble of congestion $2$ and size $188k^3$ is $(k,2)$-linked.
This was later improved by Campos et al.~\cite{campos_et_al:LIPIcs.ESA.2023.30}.
\begin{proposition}[Campos et al.~\cite{campos_et_al:LIPIcs.ESA.2023.30}]\label{proposition:campos-bramble-linkages}
For all $k,c \geq 2$ every $k$-strong digraph $D$ containing a bramble of congestion $c$ and size at least $2k(c\cdot k - c + 2) + c(k-1)$ is $(k,c)$-linked.
Moreover, given an instance $(D, S, T)$ of \ddpc{$(k,c)$}, there is an algorithm that finds a solution in time $\Ocal(k^4 \cdot n^2)$.
\end{proposition}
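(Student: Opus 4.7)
The plan is to route the $k$ terminal pairs one at a time through the bramble, using $k$-strong connectivity to enter and leave $V(\mathcal{B}) := \bigcup_{B \in \mathcal{B}} V(B)$ and using the bramble's pairwise bag-to-bag connectivity to stitch together the in-bramble portion of each path. The bound $2k(ck-c+2)+c(k-1)$ on $|\mathcal{B}|$ is tuned so that, even after the worst-case congestion accumulated by the first $i-1$ pairs, enough ``fresh'' bags remain to complete pair $i$ without any vertex being used more than $c$ times.

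First I would invoke \autoref{thm:Menger}: since $D$ is $k$-strong, for each index $i \in [k]$ there are $k$ internally vertex-disjoint paths from $s_i$ to $V(\mathcal{B})$ and $k$ such paths from $V(\mathcal{B})$ to $t_i$. Truncating at the first contact with the bramble yields an \emph{entry segment} $Q_i^-$ ending in some bag $B_i^-$ and an \emph{exit segment} $Q_i^+$ starting in some bag $B_i^+$, both with all internal vertices outside $V(\mathcal{B})$. The availability of $k$ alternatives lets me aim $Q_i^-$ and $Q_i^+$ at bags not yet exhausted by the paths built for pairs $1,\ldots,i-1$. For the in-bramble routing from $B_i^-$ to $B_i^+$, I would exploit the bramble axioms directly: since any two bags either intersect or are joined by arcs in both directions, one can hop through a short sequence of bags, using the strong connectivity of each bag to move from the ingress vertex to an arc heading to the next bag.

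The crux is the congestion bookkeeping, and this is where the specific shape of the size bound appears. Each vertex of $D$ lies in at most $c$ bags, so it can participate in at most $c$ of the constructed paths before becoming off-limits, and a bag only becomes ``saturated'' once $c$ distinct previously-routed paths have used one of its vertices. I would prove by induction on $i$ that routing a single pair spoils at most $2(ck-c+2)$ bags---roughly $ck-c+2$ on each of the entry and exit sides, the $ck-c+2$ term accounting for a chain of at most $ck$ hops divided by the $c$-fold reusability of each hit vertex, plus a constant buffer. Hence after $k-1$ pairs at most $2(k-1)(ck-c+2)$ bags are compromised, leaving the remaining $2(ck-c+2)+c(k-1)$ bags as a safety margin: $2(ck-c+2)$ for the final pair's entry and exit and $c(k-1)$ to absorb the local reconfiguration required when a freshly chosen $B_i^{\pm}$ collides with a previously-used bag.

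The main obstacle is precisely this amortized bookkeeping: one must verify that a single in-bramble route truly spoils only $\Ocal(ck)$ bags and not many more, despite each hop potentially crossing vertices of other bags. The trick is to choose the bag-to-bag route carefully, for example by traversing each bag via a vertex of minimum current congestion and by bounding the number of incidental bags any single hop can touch. Concerning the running time, each Menger-style augmenting-paths computation costs $\Ocal(k \cdot n)$, each greedy step performs $\Ocal(k)$ such computations together with $\Ocal(k)$ bag-to-bag hops of cost $\Ocal(n)$ each, and the procedure is repeated for all $k$ pairs, yielding the claimed $\Ocal(k^4 \cdot n^2)$ bound overall.
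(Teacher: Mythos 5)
Your proposal attempts a from-scratch proof of the cited result, whereas the paper never reproves it: \autoref{proposition:campos-bramble-linkages} is imported from Campos et al., and the only argument given here (at the start of \autoref{section:from-large-congestion-to-two}) is a one-line derivation from the ``solution or small separator'' dichotomy of \autoref{proposition:solution_or_small_separator} --- in a $k$-strong digraph no set $X$ with $|X| \leq k-1$ can meet all paths from $S$ to the bags of $\mathcal{B}^*$ (or from those bags to $T$), so only the first outcome of that dichotomy can occur, and the $\Ocal(k^4 \cdot n^2)$ bound is inherited from it. Measured against that, your sketch is not a complete alternative proof; it has two genuine gaps. First, the entire size bound $2k(c\cdot k - c + 2) + c(k-1)$ in your argument rests on the claim that routing one terminal pair ``spoils'' at most $2(ck-c+2)$ bags, which you assert via a heuristic (``$ck$ hops divided by $c$-fold reusability'') and then yourself flag as the main unresolved obstacle. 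The heuristic does not match the actual geometry: since any two bags intersect or are joined by arcs in both directions, the in-bramble route needs only two bags, but it may be arbitrarily long, and every vertex it uses can lie in up to $c$ bags. Under the safe accounting (a bag is compromised once any of its vertices is used) one path can compromise arbitrarily many bags; under a laxer accounting you must show that a partially used bag remains traversable, i.e.\ that a bag minus its saturated vertices is still strongly connected, which is false in general. Neither issue is addressed, so the induction does not go through as stated.

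Second, your use of \autoref{thm:Menger} is per pair: the $k$ disjoint paths from $s_i$ to $V(\mathcal{B})$ are disjoint only for a fixed $i$. Across the $k$ pairs the chosen entry and exit segments may reuse the same vertex outside the bramble up to $k > c$ times, violating the congestion bound exactly where the bramble gives you no help; and having $k$ alternatives per pair does not repair this, because the $i-1$ previously routed paths can intersect all $k$ alternatives for pair $i$. The known arguments (Edwards et al.\ and Campos et al.) avoid this by a single Menger-type computation linking all of $S$ (respectively $T$) to the bramble simultaneously, which is precisely what produces the ``linkage or $(k-1)$-separator'' alternative of \autoref{proposition:solution_or_small_separator}; the $k$-strongness is then used once, to rule out the separator, rather than pair by pair. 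The running-time claim in your last paragraph likewise presupposes the unproven bookkeeping. So the high-level flavour (Menger plus routing through bags plus congestion accounting) is right, but as written the proposal does not establish the proposition.
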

Notice that the requested size of the bramble remains quadratic in $k$ only for constant $c$.
In \autoref{section:from-large-congestion-to-two} we obtain a better version of \autoref{proposition:campos-bramble-linkages} by showing how to eliminate the dependence on $c$ from the requested size of the bramble.
Namely, we prove the following.
\begin{restatable}{theorem}{linkednessBramblesCongestion}\label{theorem:linkedness-brambles-constant-congestion}
For all $k \geq 1$ and $c \geq 2$ every $k$-strong digraph $D$ containing a bramble $\mathcal{B}$ of congestion $c$ and size at least $4k^2 + 2(k-1)$ is $(k,2\lceil c/2\rceil)$-linked.
Moreover, given an instance $(D, S, T)$ of \ddpc{$(k,2\lceil c/2 \rceil)$} and $\mathcal{B}$, there is an algorithm that finds a solution in time $\Ocal(k^4 \cdot n^2)$ assuming that the bags of $\mathcal{B}$ are given.
\end{restatable}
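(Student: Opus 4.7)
The plan is to reduce the problem to the $c=2$ case of \autoref{proposition:campos-bramble-linkages} by building, from $D$ and $\mathcal{B}$, an auxiliary digraph $D'$ and a bramble $\mathcal{B}'$ of $D'$ with $|\mathcal{B}'|=|\mathcal{B}|$ and congestion only $2$. The key idea is to split every vertex $v$ of $D$ appearing in $m_v>2$ bags of $\mathcal{B}$ into $q_v:=\lceil m_v/2\rceil$ copies $v^1,\ldots,v^{q_v}$, each sharing the in- and out-neighbourhoods of $v$ in $D$ (with no edges among copies of the same vertex); vertices with $m_v\leq 2$ are left untouched.

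To obtain $\mathcal{B}'$, each bag $B\in\mathcal{B}$ is translated into a bag $B'$ by replacing every vertex $v\in V(B)$ by one of its $q_v$ copies, distributing the assignments as evenly as possible across the $m_v$ bags in which $v$ occurs. Since $v$ sits in $m_v$ bags of $\mathcal{B}$ and has $\lceil m_v/2\rceil$ copies, every copy ends up in at most two bags of $\mathcal{B}'$, so $\mathcal{B}'$ has congestion $2$. Each $B'$ is strongly connected because copies preserve the neighbourhoods of their originals, and the same argument shows that $V(B'_1)\cup V(B'_2)$ induces a strong subdigraph for any $B'_1,B'_2\in\mathcal{B}'$. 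Thus $\mathcal{B}'$ is a bramble in $D'$ of size $|\mathcal{B}|\geq 4k^2+2(k-1)=2k(2k-2+2)+2(k-1)$, matching the hypothesis of \autoref{proposition:campos-bramble-linkages} for $c=2$. The same identification trick also shows that $D'$ inherits the $k$-strong connectivity of $D$: any separator $S'$ of $D'$ projects to a separator of $D$ of size at most $|S'|$ by merging copies back to their originals.

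Given an instance $(D,S,T)$ of $\ddpc{(k,2\lceil c/2\rceil)}$, I identify each terminal with a fixed copy of itself (say $v^1$) in $V(D')$, obtaining an instance $(D',S,T)$ of $\ddpc{(k,2)}$. Applying \autoref{proposition:campos-bramble-linkages} with $c=2$ produces, in time $\Ocal(k^4\cdot|V(D')|^2)$, a $(k,2)$-linkage $\{P'_1,\ldots,P'_k\}$ in $D'$. Projecting each $P'_i$ back to $D$ by replacing every copy by its original yields a walk from $s_i$ to $t_i$, which I turn into a simple path $P_i$ by classical shortcutting of repeated vertices. For each $v\in V(D)$, the $q_v\leq\lceil c/2\rceil$ copies in $D'$ are collectively visited by at most $2q_v\leq 2\lceil c/2\rceil$ of the paths $P'_1,\ldots,P'_k$ (by the congestion-$2$ bound in $D'$), hence $\{P_1,\ldots,P_k\}$ is a $(k,2\lceil c/2\rceil)$-linkage in $D$.

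The main technical obstacle is keeping the running time within $\Ocal(k^4\cdot n^2)$: only vertices occurring in the $\Ocal(k^2)$ bags of $\mathcal{B}$ get duplicated and each such vertex contributes at most $\lceil c/2\rceil$ copies, so the size of $D'$ must be carefully bounded so as not to inflate the dominant factor coming from \autoref{proposition:campos-bramble-linkages}. A secondary point is that a terminal vertex lying in many bags of $\mathcal{B}$ must be consistently identified with one fixed copy of itself in $D'$ for the projection step to legitimately map the resulting linkage back to one between the prescribed terminals of $D$; this is straightforward once the copy assignment for terminals is frozen before building $\mathcal{B}'$.
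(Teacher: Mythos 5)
There is a genuine gap in the central step: with your construction, $\mathcal{B}'$ need not be a bramble of $D'$. You give each copy of $v$ exactly the in- and out-neighbourhood of $v$ \emph{in $D$} and explicitly forbid edges among copies, so copies of two \emph{distinct adjacent} vertices are never adjacent in $D'$ (each copy's neighbours are original vertices only). Consequently the claim ``each $B'$ is strongly connected because copies preserve the neighbourhoods of their originals'' fails: if a bag $B$ is, say, the $2$-cycle on $\{u,v\}$ with both $u$ and $v$ occurring in at least three bags, and $B'$ replaces both by non-original copies $u^a, v^b$, then $D'[\{u^a,v^b\}]$ has no arcs at all. The pairwise condition fails for the same reason: two bags whose only interaction is through split vertices, each receiving a different copy, can become disjoint with no arcs in both directions between them. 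This is precisely the failure mode the paper anticipates and repairs by adding, for every split vertex $v$, bidirected arcs between all vertices of $X_v$ (the original together with its copies); those arcs both restore the bramble property of $\mathcal{B}'$ and underpin the later shortcutting step that bounds the congestion of the projected paths. Without some such additional arcs (or a copying rule that also makes copies of adjacent vertices adjacent), your reduction to the congestion-$2$ case does not go through.

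A secondary weakness is the connectivity transfer: you invoke \autoref{proposition:campos-bramble-linkages}, which requires $D'$ to be $k$-strong, and justify this by ``any separator of $D'$ projects to a separator of $D$'', which is an assertion rather than a proof (one must lift $D$-paths to $D'$-paths avoiding the given vertex set, and entering/leaving copies needs the degree argument). The paper deliberately avoids this issue: it adds fresh terminals $s'_i,t'_i$ attached to $X_{s_i}$ and $X_{t_i}$, keeps the bags disjoint from them, and applies the finer \autoref{proposition:solution_or_small_separator}, using \autoref{thm:Menger} and the $k$-strongness of $D$ alone to exclude the small-separator outcome, so that $D'$ itself never needs to be shown $k$-strong. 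Your terminal-identification and final projection/congestion accounting are fine, but the proof as written needs the clique-on-copies fix (or an equivalent) and a real argument replacing the separator-projection claim.
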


Pipelining \cref{theorem:constant-congestion-brambles-poly-time,theorem:linkedness-brambles-constant-congestion} and adjusting the bounds accordingly, we obtain the following.
\begin{theorem}\label{theorem:strongly-connected-implies-linkedness}
For all $\alpha > 1$ and integer $k \geq 1$ there is an integer
\[t_k = \mathcal{O}\left(k^{8 + 16\alpha + 32\alpha^2 + 40\alpha^3}\cdot (\log k^2)^{1 + 2\alpha + 4\alpha^2 + 6\alpha^3} \right)\]
such that every $k$-strong digraph $D$ with $\dtw(D) \geq t_k$ is $(k,8)$-linked.
Additionally, any given instance $(D, S, T)$ of \ddpc{$(k,8)$} can be solved in time $\poly(k,\alpha) \cdot n^2$, where the degree of the polynomial depending on $k$ and $\alpha$ comes from $\autoref{theorem:poly-LLL-independent-sets}$.
\end{theorem}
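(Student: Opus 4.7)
The plan is to pipeline the three main ingredients already assembled in the paper: Proposition~\ref{proposition:campos-dtw-to-path-system} converts large directed treewidth into a path system, Theorem~\ref{theorem:constant-congestion-brambles-poly-time} converts a path system into a bramble of congestion $8$, and Theorem~\ref{theorem:linkedness-brambles-constant-congestion} converts such a bramble, together with the hypothesis that $D$ is $k$-strong, into $(k, 2\lceil 8/2 \rceil)$-linkedness, which is exactly $(k,8)$-linkedness. Concretely, from $\dtw(D) \geq t_k$ I would first invoke Proposition~\ref{proposition:campos-dtw-to-path-system} to extract an $(a,b)$-path system in $D$, where $a$ is chosen just large enough for the bramble produced at the next stage to meet the size threshold of Theorem~\ref{theorem:linkedness-brambles-constant-congestion}, and $b$ is the polynomial value of $\ell$ demanded by Theorem~\ref{theorem:constant-congestion-brambles-poly-time} when its first parameter is fixed to $a$.

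For the parameter bookkeeping I would set $a := 4k^2 + 2(k-1) = \Theta(k^2)$, so that the resulting bramble has size at least $a$ and congestion $8$ and therefore satisfies the size hypothesis of Theorem~\ref{theorem:linkedness-brambles-constant-congestion} for parameter $k$. The value of $b$ is the polynomial bound supplied by Theorem~\ref{theorem:constant-congestion-brambles-poly-time}; its polynomial degree depends on $\alpha$ through the quantitative form of the LLL-based step, and it is the precise shape of $b$ that drives the $\alpha$-dependent exponents in the final $t_k$. Substituting into the requirement $\dtw(D) \geq c \cdot a^2 \cdot b^2$ from Proposition~\ref{proposition:campos-dtw-to-path-system} and simplifying the resulting $\Omega(k^4 \cdot b^2)$ bound using the explicit polynomial expression for $b$ in $k$ and $\alpha$ yields the stated formula $t_k = \mathcal{O}\bigl(k^{8 + 16\alpha + 32\alpha^2 + 40\alpha^3} \cdot (\log k^2)^{1 + 2\alpha + 4\alpha^2 + 6\alpha^3}\bigr)$. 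This substitution is the only piece of real calculation in the proof.

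For the algorithmic claim, the same three steps would be performed constructively on a given instance $(D, S, T)$ of $(k,8)$-\textsc{DDP}: extract the path system using Proposition~\ref{proposition:campos-dtw-to-path-system}, build a bramble of size $a$ and congestion $8$ in polynomial time via Theorem~\ref{theorem:constant-congestion-brambles-poly-time}, and finish with the $\mathcal{O}(k^4 \cdot n^2)$ linkage routine of Theorem~\ref{theorem:linkedness-brambles-constant-congestion}. Checking that the polynomial degree of the $k$- and $\alpha$-dependent factor in the overall running time is indeed governed by the LLL-dependent step of Theorem~\ref{theorem:constant-congestion-brambles-poly-time} (via \autoref{theorem:poly-LLL-independent-sets}) completes the running-time analysis. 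The hard part is not the composition itself, which is essentially substitution, but the quantitative analysis already resolved inside Theorem~\ref{theorem:constant-congestion-brambles-poly-time}; once that bound is in hand, the present theorem follows by routine bookkeeping.
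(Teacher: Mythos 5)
Your overall plan — extract a path system via \autoref{proposition:campos-dtw-to-path-system}, turn it into a congestion-$8$ bramble via \autoref{theorem:constant-congestion-brambles-poly-time}, and finish with \autoref{theorem:linkedness-brambles-constant-congestion} for even $c=8$, so that $(k,2\lceil 8/2\rceil)$-linked is $(k,8)$-linked — is exactly the route the paper takes, and your description of the algorithmic part matches as well. The gap is in the parameter bookkeeping, and it is not cosmetic. You set the first path-system parameter to $a := 4k^2+2(k-1)$ and assert that ``the resulting bramble has size at least $a$''. The construction behind \autoref{theorem:constant-congestion-brambles-poly-time} does not deliver that: the winning scenarios (\autoref{lemma:sparse-winning-scenario} and \autoref{lemma:sparse-winning-scenario-wrapped}) produce, from an $(a,b)$-path system, a bramble of size only $c\sqrt{a}/\sqrt[4]{\log a}$, and the quantities $t,\ell$ in \autoref{theorem:constant-congestion-brambles-poly-time} are fixed functions of the target bramble size (with $t=a^2$, $a=\Theta(\kappa^2\sqrt{\log\kappa})$ for target $\kappa$), not free parameters you may equate with the bramble-size threshold.

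Consequently, to satisfy the size requirement $4k^2+2(k-1)$ of \autoref{theorem:linkedness-brambles-constant-congestion} the entire construction must be instantiated with target $\kappa=\Theta(k^2)$, which forces $a=\Theta\left(k^4\sqrt{\log k^2}\right)$ and $b=\Theta\left(\kappa^{4\alpha+8\alpha^2+10\alpha^3}(\log\kappa)^{\alpha+2\alpha^2+3\alpha^3}\right)$ with $\kappa=\Theta(k^2)$; then $c\cdot a^2 b^2$ evaluates precisely to the stated $\mathcal{O}\left(k^{8+16\alpha+32\alpha^2+40\alpha^3}(\log k^2)^{1+2\alpha+4\alpha^2+6\alpha^3}\right)$. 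With your choice $a=\Theta(k^2)$ the obtained bramble has size only about $\sqrt{a}=\Theta(k)$, too small for \autoref{theorem:linkedness-brambles-constant-congestion}, and your claimed simplification of the ``$\Omega(k^4\cdot b^2)$'' requirement cannot produce the $k^8$ factor nor the doubled $\alpha$-exponents of the stated $t_k$, since those come from $a^2=\Theta(k^8\log k^2)$. In short, the missing step is the $k\mapsto\Theta(k^2)$ re-indexing of the whole bramble construction (the source of the $\log k^2$ terms); without it the stated formula does not follow from your substitution.
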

The running time of \autoref{theorem:strongly-connected-implies-linkedness} depends on \autoref{theorem:poly-LLL-independent-sets} and  is further discussed in the end of \autoref{sec:algorithmic-construction}.
Alternatively, using \autoref{proposition:Harris-independent-set} we obtain a better bound on $t_k$ (the same as \autoref{theorem:strongly-connected-implies-linkedness} with $\alpha = 1$) but the degree of $\poly(k, \alpha)$ is unknown.
\begin{theorem}\label{theorem:strongly-connected-implies-linkedness-using-harris}
For all $\alpha > 1$ and integer $k \geq 1$ there is an integer
\[t_k = \mathcal{O}\left(k^{96}\cdot (\log k^2)^{13} \right)\]
such that every $k$-strong digraph $D$ with $\dtw(D) \geq t_k$ is $(k,8)$-linked.
Additionally, any given instance $(D, S, T)$ of \ddpc{$(k,8)$} can be solved in polynomial time.
\end{theorem}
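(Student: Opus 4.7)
The plan is to pipeline the pieces already developed in the paper, using the Harris-based algorithmic LLL (\autoref{proposition:Harris-independent-set}) in place of the Chandrasekaran et al.~version that powered \autoref{theorem:strongly-connected-implies-linkedness}. The target bramble size is dictated by \autoref{theorem:linkedness-brambles-constant-congestion}: to establish $(k,8)$-linkedness (note $2\lceil 8/2\rceil = 8$) in a $k$-strong digraph it suffices to exhibit, with $c = 8$, a bramble of congestion $8$ and size $m := 4k^2 + 2(k-1) = \Ocal(k^2)$. So the task reduces to producing, in polynomial time, such a bramble whenever $\dtw(D)$ is large enough, and then feeding it into \autoref{theorem:linkedness-brambles-constant-congestion}.

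First I would set $m = 4k^2 + 2(k-1)$ and invoke \autoref{proposition:campos-dtw-to-path-system} on $D$: for a suitable constant $c_0$, every digraph whose directed treewidth exceeds $c_0\cdot a^2 b^2$ contains an $(a,b)$-path system, which can be computed in \FPT time in $\max\{a,b\}$, hence in polynomial time once $a,b$ are fixed polynomials in $k$. Next I would apply the Harris-based variant of \autoref{theorem:constant-congestion-brambles-poly-time}: rerun the proof of \masarik et al.~\cite{Masarik2022} verbatim, but replacing the non-constructive LLL application inside \cite[Lemma 2.5]{Masarik2022} with \autoref{proposition:Harris-independent-set}, and the appeal to Thomason/Kostochka with Dujmovi\'c's algorithmic clique-minor theorem \cite[Theorem 1.1]{Dujmovic2013}. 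Since Harris' theorem preserves the asymptotic parameters that \masarik et al.'s original probabilistic argument required, this substitution does not worsen the $\Ocal(k^{48}\log^{13} k)$ bound from \autoref{proposition:bramble-masarik-congestion-8}; it only yields a polynomial-time procedure (with unspecified degree, per \cite{Harris.personal.communicaton.2024}). Applying the resulting algorithm with parameter $m$ in place of $k$ returns a bramble of congestion $8$ and size $m$ whenever the input $(a,b)$-path system has $a, b = \poly(m) = \poly(k)$.

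Chaining the two bounds, the treewidth threshold becomes
\[
t_k \;=\; \Ocal\bigl(m^{48}\cdot \log^{13} m\bigr) \;=\; \Ocal\bigl(k^{96}\cdot (\log k^2)^{13}\bigr),
\]
which is precisely the claimed bound. Finally, since $D$ is assumed to be $k$-strong and we now have a bramble of congestion $8$ and size $\geq 4k^2 + 2(k-1)$, \autoref{theorem:linkedness-brambles-constant-congestion} certifies $(k,8)$-linkedness and, moreover, solves any given instance $(D,S,T)$ of $(k,8)$-\textsc{DDP} in time $\Ocal(k^4\cdot n^2)$ assuming the bags of the bramble are available; the bramble itself is built in polynomial time by the pipeline above, so the overall algorithm runs in polynomial time.

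The only genuine subtlety is making sure that the Harris substitution really is a drop-in replacement inside every occurrence of the LLL in \cite{Masarik2022}; the bookkeeping is that the hypotheses of \autoref{proposition:Harris-independent-set} match the $k$-partite bounded-degeneracy setting in which \masarik et al.\ apply the LLL at each refinement step, and this match is exactly what lets $\alpha$ collapse to the $\alpha = 1$ regime of \autoref{theorem:strongly-connected-implies-linkedness} while keeping runtime polynomial. Once this is verified, the theorem follows by pure composition, with no additional probabilistic or combinatorial argument required.
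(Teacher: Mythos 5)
Your proposal matches the paper's route exactly: pipeline the path-system extraction from directed treewidth, the \masarik et al.\ construction made algorithmic via \autoref{proposition:Harris-independent-set} (plus Dujmovi\'c's clique-minor algorithm), which preserves the $\Ocal(k^{48}\log^{13}k)$ bound, and then apply \autoref{theorem:linkedness-brambles-constant-congestion} with target bramble size $4k^2+2(k-1)=\Theta(k^2)$, giving $t_k=\Ocal(k^{96}(\log k^2)^{13})$ and a polynomial-time solution. The only detail you flag but do not carry out is the small reduction the paper makes from bounded degeneracy to bounded maximum degree (at least $t/2$ vertices per part have degree $\Ocal((r-1)b)$, so $t\geq 17(r-1)b$ suffices for Harris's hypothesis), which is exactly how the constant-factor match with \cite[Lemma 2.5]{Masarik2022} is verified.
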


Finally, applying \autoref{proposition:path-systems-strong-digraphs}, \autoref{theorem:strongly-connected-implies-linkedness}, \and \autoref{theorem:strongly-connected-implies-linkedness-using-harris} we prove the following.

\begin{corollary}
For all $\alpha \geq 1$ and integers $k \geq 1$ there is a polynomial function $g(k, \alpha)$ such that every $g(k, \alpha)$-strong digraph is $(k,8)$-linked.
\end{corollary}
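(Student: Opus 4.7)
The plan is to stitch together the three ingredients developed earlier in this section. Starting from a sufficiently strong digraph $D$, I would first invoke \autoref{proposition:path-systems-strong-digraphs} to extract a path system directly from the strong connectivity hypothesis, then apply \autoref{theorem:constant-congestion-brambles-poly-time} to build from it a bramble of congestion $8$ of the size demanded by \autoref{theorem:linkedness-brambles-constant-congestion}, and finally appeal to \autoref{theorem:linkedness-brambles-constant-congestion} to conclude $(k,8)$-linkedness. Going through \autoref{proposition:path-systems-strong-digraphs} (rather than first certifying large directed treewidth in order to call \autoref{theorem:strongly-connected-implies-linkedness} or \autoref{theorem:strongly-connected-implies-linkedness-using-harris}) is what keeps the path-system step elementary and algorithmically cheap.

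Concretely, fix the bramble size $k' := 4k^2 + 2(k-1)$ demanded by \autoref{theorem:linkedness-brambles-constant-congestion} with $c = 8$. Instantiating \autoref{theorem:constant-congestion-brambles-poly-time} with $k'$ in place of $k$ yields integers $t = t(k,\alpha)$ and $\ell = \ell(k,\alpha)$, both polynomial in $k$ with degree depending on $\alpha$ via the LLL step, such that every digraph hosting a $(t,\ell)$-path system contains a bramble of size $t \geq k'$ and congestion $8$. Setting $q := \max\{t,\ell\}$, I would define
\[
  g(k,\alpha) := 2q^2,
\]
which remains polynomial in $k$ and $\alpha$. Given any $g(k,\alpha)$-strong digraph $D$, \autoref{proposition:path-systems-strong-digraphs} applied with parameter $q$ produces a $(q,q)$-path system in $D$; this restricts to a $(t,\ell)$-path system since $q \geq t, \ell$, using that subsets of well-linked sets remain well-linked and that any sublinkage of the prescribed linkages of size $q \geq \ell$ has size $\ell$. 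Invoking \autoref{theorem:constant-congestion-brambles-poly-time} then yields a bramble $\mathcal{B}$ of size at least $k'$ and congestion $8$ in $D$. Because $g(k,\alpha) \geq k$, the digraph $D$ is in particular $k$-strong, so \autoref{theorem:linkedness-brambles-constant-congestion} (with $c = 8$, noting $2\lceil 8/2 \rceil = 8$) concludes that $D$ is $(k,8)$-linked.

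The only mildly delicate point is the consistent restriction of a $(q,q)$-path system to a $(t,\ell)$-path system, since the original linkages connect fixed well-linked sets of size $q$ whereas a $(t,\ell)$-path system requires linkages joining specific subsets of size $\ell$. I expect this to be purely bookkeeping: either by selecting, for each ordered pair $(i,j)$, the $\ell$ paths of $L_{i,j}$ whose endpoints in $A_i^{\textsf{out}}$ lie in a common preselected $\ell$-subset (and analogously in $A_j^{\textsf{in}}$), or more robustly by picking $q$ slightly larger than strictly necessary and rebuilding the needed sublinkages inside $D$ via \autoref{thm:Menger} after the restriction. Either route preserves polynomiality of $g(k,\alpha)$, yielding the corollary.
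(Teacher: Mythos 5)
Your proposal is correct and follows essentially the same route as the paper: extract the path system directly from the strong-connectivity hypothesis via \autoref{proposition:path-systems-strong-digraphs}, then pipeline \autoref{theorem:constant-congestion-brambles-poly-time} and \autoref{theorem:linkedness-brambles-constant-congestion}, which is precisely the content of \autoref{theorem:strongly-connected-implies-linkedness} (resp.\ \autoref{theorem:strongly-connected-implies-linkedness-using-harris}) that the paper invokes. One small remark on your bookkeeping step: the first fix you sketch (preselecting $\ell$-subsets of $A_i^{\textsf{out}}$ and $A_j^{\textsf{in}}$ and keeping the paths of $L_{i,j}$ with both endpoints inside them) need not yield $\ell$ paths, since the same target subset of $A_j^{\textsf{in}}$ must serve every source $i$ simultaneously; the second route, rebuilding the sublinkages with \autoref{thm:Menger} using that $D$ is $2q^2$-strong, is the one that actually works and preserves polynomiality.
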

\section{From congestion $c$ to congestion $2$}\label{section:from-large-congestion-to-two}
\autoref{proposition:campos-bramble-linkages} is a consequence of another result from the same article~\cite{campos_et_al:LIPIcs.ESA.2023.30} stating that, given an instance $(D, S, T)$ of \ddpc{$(k,c)$} and a sufficiently large bramble $\mathcal{B}$ of congestion $c$ in $D$, we can either find a positive solution to $(D, S, T)$ or a separator of size at most $k-1$ intersecting all paths from $S$ to the bags of $\mathcal{B}$ or all paths from the bags of $\mathcal{B}$ to $T$.
Thus if $D$ is $k$-strong such a separator cannot exist and \autoref{proposition:campos-bramble-linkages} is obtained.
\begin{proposition}[Campos et al.~\cite{campos_et_al:LIPIcs.ESA.2023.30}]\label{proposition:solution_or_small_separator}
Let $k,c$ be integers with $k,c \geq 2$ and $g(k,c) = 2k(c\cdot k - c + 2) + c(k-1)$.
Let $D$ be a digraph, assume that we are given the bags of a bramble $\mathcal{B}$ of congestion $c$ and size at least $g(k,c)$, and sets $S,T \subseteq V(D)$ with $S = \{s_1, \ldots, s_k\}$ and $T = \{t_1, \ldots, t_k\}$.
Then in time $\Ocal(k^4 \cdot n^2)$ one can either
\begin{enumerate}
  \item find a set of paths $\{P_1, \ldots, P_k\}$ in $D$ such that each $P_i$ with $i \in [k]$ is a path from $s_i$ to $t_i$ and each vertex of $D$ appears in at most $c$ of these paths, or
  \item find a $\mathcal{B}^* \subseteq \mathcal{B}$ with $|\mathcal{B}^*| \geq g(k,c) - c(k-1)$ and a set $X \subseteq V(D)$ such that $|X| \leq k-1$, $X$ is disjoint from all bags of $\mathcal{B}^*$, and $X$ intersects all paths from $S$ to the bags of $\mathcal{B}^*$ or all paths from these bags to $T$.
\end{enumerate}
\end{proposition}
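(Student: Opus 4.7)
\medskip

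\noindent\textbf{Proof proposal for \autoref{proposition:solution_or_small_separator}.}

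The plan is to decouple the problem into three stages: (i) an $S$-side Menger computation, (ii) a symmetric $T$-side Menger computation, and (iii) an iterative routing through $\mathcal{B}$ that concatenates the two sides while respecting the prescribed pairing $s_i \mapsto t_i$ and the congestion budget $c$. Set $V(\mathcal{B}) := \bigcup_{B \in \mathcal{B}} V(B)$. First I would apply \autoref{thm:Menger} to $(S, V(\mathcal{B}))$. Either it returns a linkage $\mathcal{L}_S$ of size $k$ from $S$ into the bags, or it returns a separator $X$ with $|X| \leq k-1$. In the latter case, since every vertex lies in at most $c$ bags of $\mathcal{B}$, the set $X$ hits at most $c(k-1)$ bags; the remaining sub-bramble $\mathcal{B}^* := \{B \in \mathcal{B} : V(B) \cap X = \emptyset\}$ has size at least $|\mathcal{B}| - c(k-1) \geq g(k,c) - c(k-1)$, and output~2 is produced. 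I would then run the symmetric argument from $V(\mathcal{B})$ to $T$; a separator again yields output~2. So it remains to handle the case where we have both linkages $\mathcal{L}_S$ and $\mathcal{L}_T$ and must produce output~1.

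For the routing stage, let $a_i$ be the terminal of the $\mathcal{L}_S$-path starting at $s_i$ and $b_i$ be the initial vertex of the $\mathcal{L}_T$-path ending at $t_i$; each $a_i$ lies in some bag and each $b_i$ lies in some bag. I would process the indices $i = 1,\dots,k$ iteratively, maintaining for each bag $B \in \mathcal{B}$ a usage counter $u(B)$ that records how many of the already-constructed paths intersect $V(B)$. At iteration $i$, the goal is to build a path from $a_i$ to $b_i$ whose interior lies in $\bigcup\{V(B) : B \in \mathcal{B}\} \cup \text{short hops}$, using only bags whose usage is still ``low''. Concretely, I would pick one bag $B^a_i$ containing $a_i$ with $u(B^a_i) < c$ (it exists because $a_i$ is covered by at most $c$ bags, and we will reserve at most one usage per previously processed path), traverse $B^a_i$ using its strong connectivity, then hop through a single ``connector'' bag $B^\star_i$ to a bag $B^b_i$ containing $b_i$ with $u(B^b_i) < c$, and finally traverse $B^b_i$ to reach $b_i$. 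The concatenation is possible thanks to the bramble definition: $V(B^a_i) \cup V(B^\star_i)$ and $V(B^\star_i) \cup V(B^b_i)$ both induce strong subgraphs, so one can splice the three internal traversals together.

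The counting that drives the bound $g(k,c) = 2k(ck - c + 2) + c(k-1)$ is then the following. Over the whole process, the ``endpoint'' bags used contribute at most $2k$ distinct bags, but each endpoint vertex is covered by up to $c$ bags, so up to $2kc$ attempts may be blocked by high usage; this is absorbed by the factor $2k(ck - c + 2)$ in $g(k,c)$. For the connector bag $B^\star_i$, at iteration $i$ we must avoid bags already ``saturated'' by the previous $i-1$ paths: each previous path increases the usage of $O(1)$ bags by one, so fewer than $c(k-1)$ bags are forbidden, explaining the additive $c(k-1)$ term. Consequently, as long as $|\mathcal{B}| \geq g(k,c)$, a fresh connector bag can be found at every iteration, and at the end each vertex of $D$ lies in at most $c$ paths because it lies in at most $c$ bags and each path uses a vertex only through the bag(s) it traverses (plus the edges of $\mathcal{L}_S \cup \mathcal{L}_T$, which are already disjoint).

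The main obstacle I anticipate is precisely the bookkeeping in the third step: proving that the greedy choice of the connector bag $B^\star_i$ is always available requires a careful inductive invariant linking the number of ``forbidden'' bags at iteration $i$ to the congestion of $\mathcal{B}$, and guaranteeing that the strong-connectivity traversal inside each chosen bag does not secretly inflate the congestion beyond $c$ on a single vertex. The rest is routine: each Menger application runs in $\mathcal{O}(k \cdot n^2)$ time (at most $k$ augmenting paths, each searched in $\mathcal{O}(n^2)$), the iterative routing performs $k$ rounds each making $\mathcal{O}(k^2)$ local searches of size $\mathcal{O}(n)$, and aggregating these gives the advertised $\mathcal{O}(k^4 \cdot n^2)$ bound.
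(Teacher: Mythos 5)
Your first stage is sound: applying \autoref{thm:Menger} between $S$ and $\bigcup_{B\in\mathcal{B}}V(B)$ (and symmetrically towards $T$), and observing that a separator of size at most $k-1$ can meet at most $c(k-1)$ bags of a congestion-$c$ bramble, correctly produces outcome~2 with $|\mathcal{B}^*|\geq g(k,c)-c(k-1)$. But the substance of this proposition is the third stage, and there the proposal has a genuine gap rather than a routine bookkeeping detail. Your invariant ``use only bags $B$ with $u(B)<c$'' does not bound the \emph{vertex} congestion by $c$: a vertex $v$ lies in up to $c$ bags, and under your rule each of those bags may be traversed by up to $c$ of the constructed paths, so $v$ can end up on roughly $c^2$ middle segments; on top of that $v$ may also lie on one path of $\mathcal{L}_S$ and one of $\mathcal{L}_T$ (these two linkages are internally disjoint but not disjoint from each other, nor from the bag traversals, contrary to your closing remark). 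To get congestion $c$ one essentially needs each bag to be traversed by at most a bounded number of paths (in the extreme, one), so that the bramble's congestion $c$ itself caps the contribution of the middle segments, together with a trimming/rerouting argument ensuring the prefix and suffix occurrences do not push a vertex over the budget; none of this is carried out.

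Relatedly, your derivation of $g(k,c)=2k(ck-c+2)+c(k-1)$ does not follow from the scheme you describe: each of your paths uses three bags ($B^a_i$, $B^\star_i$, $B^b_i$), so the number of bags ``forbidden'' at iteration $i$ under your rule would be $O(i)$, not ``fewer than $c(k-1)$,'' and the claim that ``up to $2kc$ attempts may be blocked'' is an assertion, not an inductive invariant; the fact that the connector bag $B^\star_i$ is superfluous (any two bags of a bramble already induce a strong subgraph) is a further sign that the count is not tracking the structure that actually forces this bound. Since the entire point of the quantitative hypothesis on $|\mathcal{B}|$ is to make such an invariant go through, the proof of outcome~1 (and hence of the proposition) is missing; you have correctly identified where the difficulty lies, but flagging it does not discharge it. The running-time discussion is plausible but moot until the routing argument is fixed.
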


\autoref{proposition:solution_or_small_separator} gives us a little more freedom in the following proof, since it allows us to modify the given digraph into one that does not need to be $k$-strong.
To ensure that item \lipItem{2.} of \autoref{proposition:solution_or_small_separator} does not occur, we only need that no small separator appears between $S$ and a large part of the bramble $\mathcal{B}$, nor between a large part of $\mathcal{B}$  and $T$.
In what follows, by adding a \emph{copy} of a vertex $v \in V(D)$ to $D$ we mean the operation of adding to $D$ a new vertex $v'$ that has the same in- and out-neighbours as $v$.
For clarity, we restate \autoref{theorem:linkedness-brambles-constant-congestion} before giving its proof.

\linkednessBramblesCongestion*
\begin{proof}
Let $S,T$ be disjoint subsets of $V(D)$ with $S = \{s_i \mid i \in [k]\}$ and $T = \{t_i \mid i \in [k]\}$.
Let $\{B_i \mid i \in [r]\}$ be the bags of $\mathcal{B}$.
The goal is to construct an instance $(D', S', T')$ of \ddpc{$(k,2)$} such that we can apply \autoref{proposition:solution_or_small_separator} with a guarantee that only the first outcome is possible.
Then we translate the obtained solution to $(D', S', T')$ into the desired paths from $s_i$ to $t_i$ in $D$ and the result will follow. To construct $\mathcal{B}'$, the idea is to make copies of every vertex of $D$ occurring in at least $3$ bags of $\mathcal{B}$, and distribute these copies along the bags containing $v$ ensuring that the newly generated bramble has congestion $2$. Let us now describe how this construction is done.

Start with $D' = D$ and $\mathcal{B}' = \mathcal{B}$.
Fix a bijective mapping $\sigma : \mathcal{B} \to \mathcal{B}'$ where $\sigma(B) = B'$ if and only $V(B) = V(B')$.
We can assume that $\mathcal{B}$ does not contain two copies of the same bag since the congestion of $\mathcal{B}$ can only decrease when removing duplicate bags.
We remark that although the bags themselves of $\mathcal{B'}$ may change throughout the proof, the mapping $\sigma$ is defined here and remains unchanged from now on.
For every $v \in V(D)$ let $\occur(v) = |\{B \in \mathcal{B} \mid v \in V(B)\}|$.

For $i \in k$, add to $D'$ copies $s'_i$ of $s_i$ and $t'_i$ of $T_i$ and define $S' = \{s'_i \mid i \in [k]\}$ and $T' = \{t'_i \mid i \in [k]\}$.
Then, add to $D'$ an arc from $s'_i$ to $s_i$ and an arc from $t'_i$ to $t_i$.
Thus clearly every bag of $\mathcal{B}'$ is disjoint from $S' \cup T'$.

Let $X$ be the set containing all $v \in V(D)$ with $\occur(v) \geq 3$.
For each $v \in X$ denote by $B^v_1, \ldots, B^v_{\occur(v)}$ the elements of $\mathcal{B}$ containing $v$, define $\ell = \lceil\frac{1}{2} \occur(v) \rceil$ and add to $D'$ copies $v^2, \ldots, v^\ell$ of $v$.
Define $v^1 = v$.
Now, for each $i \in [\ell-1]$ we set
\begin{itemize}
 \item $V(\sigma(B^v_{2i-1})) = (V(B^v_{2i-1}) \setminus \{v\}) \cup \{v^i\}$, and
 \item $V(\sigma(B^v_{2i})) = (V(B^v_{2i}) \setminus \{v\}) \cup \{v^i\}$.
\end{itemize}
Set $V(\sigma(B^v_{2\ell-1})) = (V(B^v_{2\ell-1}) \setminus \{v\}) \cup \{v_i\}$ and, if $\occur(v)$ is even, set $V(\sigma(B^v_{2\ell})) = (V(B_{2\ell}) \setminus \{v\}) \cup \{v_i\}$.
For $v \in X$ let $X_v \subseteq V(D')$ contain  $v$ and the copies of $v$ that were added to $D'$.
Now, for all $v \in X$ we add to $D'$ all arcs in both directions between all pairs of vertices in $X_v$.
This serves two purposes: first it ensures that $\mathcal{B}'$ is indeed a bramble of $D'$, since the construction described so far may result in two bags of $\mathcal{B}$ becoming disjoint and without arcs in both directions in $D'$ after copying vertices and distributing them in the bags of $\mathcal{B}'$.
Second, it ensures that any path of $D'$ using a copy of a vertex $v \in X$ can access all bags of $\mathcal{B}'$ associated with bags of $\mathcal{B}$ containing $v$ using only copies of $v$.

Finally, for $i \in [k]$, add to $D'$ vertices $s'_i$ and $t'_i$ and define $S' = \{s'_i \mid i \in [k]\}$ and $T' = \{t'_i \mid i \in [k]\}$.
If $s_i \in X$, then add to $D'$ all arcs from $s'_i$  to $X_{s_i}$.
Otherwise, add to $D'$ arcs from $s'_i$ to $s_i$.
Similarly, if $t_i \in X$ add to $D'$ all arcs from from $X_{t_i}$ to $t'_i$ and, otherwise, add to $D'$ an arc from $t_i$ to $t'_i$.
See \autoref{fig:construction_bramble_transformation} for an illustration of this construction.
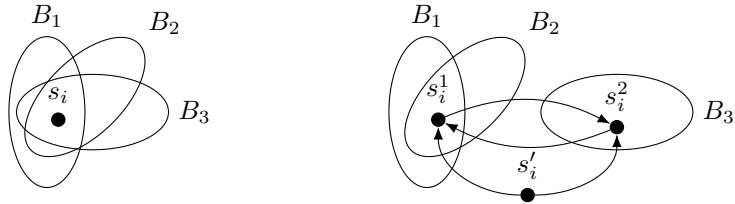
\begin{figure}[h]
\centering
\begin{tikzpicture}
\node[draw, ellipse, minimum width = 1cm, minimum height = 2cm, label = {$B_1$}] (B1) at (0,0) {};
\node[draw, ellipse, minimum width = 1cm, minimum height = 2cm, rotate = 90, label = -90:{$B_3$}] (B3) at (.6,0) {};
\node[draw, ellipse, minimum width = 1cm, minimum height = 2cm, rotate = -45, label = 90:{$B_2$}] (B2) at (.5,.2) {};
\node[blackvertex, label = 90:{$s_i$}] (s) at (.15,-.1) {};
\begin{scope}[xshift = 5cm]
\node[draw, ellipse, minimum width = 1cm, minimum height = 2cm, label = {$B_1$}] (B1) at (0,0) {};
\node[draw, ellipse, minimum width = 1cm, minimum height = 2cm, rotate = 90, label = -90:{$B_3$}] (B3) at (2.5,0) {};
\node[draw, ellipse, minimum width = 1cm, minimum height = 2cm, rotate = -45, label = 90:{$B_2$}] (B2) at (.5,.2) {};
\node[blackvertex, label = 90:{$s^1_i$}] (s) at (.15,-.1) {};
\node[blackvertex, label = 90:{$s^2_i$}] (s1) at ($(B3) + (0, -.2)$) {};
\draw[arrow] (s) to [bend left = 25] (s1);
\draw[arrow] (s1) to [bend left = 25] (s);

\node[blackvertex, label = 90:{$s'_i$}] (s2) at ($(s) + (1.175,-1)$) {};
\draw[arrow] (s2) to [out = 180, in = -90] (s);
\draw[arrow] (s2) to [out = 0, in = - 90] (s1);

\end{scope}
\end{tikzpicture}
\caption{Bramble transformation used in the proof of \autoref{theorem:linkedness-brambles-constant-congestion}. In the example, $s_i \in S$ and $\occur(s_i) = 3$.}
\label{fig:construction_bramble_transformation}
\end{figure}

We claim that if there is a solution $\{P'_1, \ldots, P'_k\}$ to the instance $(D', S', T')$ of \ddpc{$(k,2)$} then there is a collection of paths $\{P_1, \ldots, P_k\}$ in $D$ such that each $P_i$ with $i \in [k]$ is a path from $s_i$ to $t_i$ and every vertex of $D$ appears in at most $2\lceil c/2\rceil$ paths of the collection.
We can assume that no $P'_i$ contains two vertices of some $X_v$ since such a path can be shortened by keeping only the first occurrence of such a vertex in $P'_i$ and jumping from this vertex to the out-neighbour in $P'_i$ of the last occurrence of a vertex of $X_v$ in the path.

Now for each $P'_i$ we construct $P_i$ by simply starting from $s_i$ and following $P'_i$ substituting each occurrence of a vertex $u \in X_v$ by $v$.
Since $|X_v| \leq \left\lceil \frac{1}{2}\occur(v)\right\rceil$  and every vertex of $D'$ occurs in at most two paths of $\{P'_1, \ldots, P'_k\}$, it immediately follows that every vertex of $D$ occurs in at most $2\lceil c/2\rceil$ paths among $P_1, \ldots, P_k$.

Since $|\mathcal{B}'| = |\mathcal{B}|$ and $D$ is $k$-strong, the arcs leaving $S'$ and entering $T'$ in $D'$ guarantee that there are at least $k$ disjoint paths from $S'$ to the bags of any $\mathcal{B}^* \subseteq \mathcal{B}'$ with size at least $k$, and at least $k$ disjoint paths from the bags of any such $\mathcal{B}^*$ to $T'$.
From \autoref{thm:Menger}, it follows that we can apply \autoref{proposition:solution_or_small_separator} with input $D', \mathcal{B}', S'$, and $T'$ with a guarantee that only the first outcome is possible and thus we conclude that $D$ is $(k, \lceil c/2 \rceil)$-linked.

The running time follows from the running time of \autoref{proposition:solution_or_small_separator} and by observing that $\mathcal{B}'$ can be constructed in time $\Ocal(|V(D)| \cdot |\mathcal{B}|\cdot c)$.
\end{proof}

\section{Finding the bramble in polynomial time}\label{sec:finding-the-bramble}

To prove \autoref{theorem:constant-congestion-brambles-poly-time}, we need to introduce results about algorithmic versions of the Lovász Local Lemma and on how to find clique minors in dense graphs.

\subsection{The tools}
Let $g(t)$ be the minimum integer $d$ such that every graph with average degree at least $d$ contains a $K_t$-minor.
Mader~\cite{Mader1967} showed that $g(t)$ is well-defined and that $g(t) \geq 2^{t-2}$.
The best bound of $g(t) = \Theta(t \sqrt{\log t})$ was given by Thomason~\cite{Thomason1984} and Kostochka~\cite{Kostochka1984}.
Dujmovi\'{c} et al.~\cite{Dujmovic2013} showed an algorithm that, under the same conditions, finds a $K_t$-minor in the given graph.
Since every graph of degeneracy at least $d$  has a non-empty subgraph with average degree (in fact, even minimum degree) at least $d$, we can rewrite the original statement of this result using degeneracy instead of average degree.
\begin{proposition}[Dujmovi\'{c} et al.~\cite{Dujmovic2013}, degeneracy version]\label{proposition:degeneracy-clique-minor}
For any sufficiently large  $t$, there is a constant $c$ and an algorithm running in time $\Ocal(n)$ that, given a graph $G$ with degeneracy at least $c \cdot t\sqrt{\log t}$, finds a $K_t$-minor in $G$.
\end{proposition}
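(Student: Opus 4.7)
The plan is to reduce this degeneracy-based statement to the original average-degree-based statement of Dujmovi\'{c} et al.~\cite{Dujmovic2013}, exploiting the textbook equivalence between degeneracy and the maximum minimum degree over subgraphs. First I would show that a graph $G$ whose degeneracy is at least $d := c\cdot t\sqrt{\log t}$ contains a non-empty induced subgraph $H$ with minimum degree at least $d$. Indeed, the degeneracy of $G$ being at least $d$ is equivalent to $G$ \emph{not} being $(d{-}1)$-degenerate, which by definition means there exists a non-empty induced subgraph of $G$ in which every vertex has degree at least $d$. Such an $H$ can be produced algorithmically by the classical ``peeling'' procedure: repeatedly delete any vertex of current degree strictly less than $d$; what remains is exactly a maximum subgraph of minimum degree at least $d$, and if the procedure emptied the graph then $G$ would be $(d{-}1)$-degenerate, a contradiction.

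Second, I would invoke the original algorithm of Dujmovi\'{c} et al. on the subgraph $H$ obtained above. Since $H$ has minimum degree at least $d$, its average degree is also at least $d = c \cdot t\sqrt{\log t}$, which is precisely the hypothesis of the average-degree formulation of their result. Applying their algorithm returns a $K_t$-minor of $H$, which, since $H$ is an induced subgraph of $G$, is automatically a $K_t$-minor of $G$. This gives the desired output.

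What remains is to confirm that the whole pipeline runs in time $\Ocal(n)$ (with the constant depending on $t$ absorbed into the $\Ocal$-notation, matching the convention used in~\cite{Dujmovic2013}). The peeling step can be implemented in linear time in the size of the input graph using the Matula--Beck bucket-queue data structure, so once equipped with an adjacency-list representation, the reduction to $H$ takes time linear in $|V(G)|+|E(G)|$; we may additionally assume, without loss of generality for the minor-finding problem, that the input graph has been preprocessed to have $\Ocal(dn)$ edges (any vertex of degree less than $d$ is useless and is removed in the first pass of the peeling). The second step inherits directly the $\Ocal(|V(H)|)=\Ocal(n)$ running time of~\cite{Dujmovic2013}. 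The main (very mild) obstacle is thus a bookkeeping check that the standard peeling can indeed be executed in linear time in $n$ under the same conventions as the original algorithm; this is routine and follows the standard analysis of degeneracy orderings.
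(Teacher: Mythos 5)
Your proposal is correct and follows essentially the same route as the paper, which justifies this degeneracy version with exactly the observation that a graph of degeneracy at least $d$ contains a non-empty (induced) subgraph of minimum degree, hence average degree, at least $d$, to which the original average-degree algorithm of Dujmovi\'{c} et al.~is applied. Your added detail on extracting that subgraph by the standard peeling procedure is just the explicit algorithmic bookkeeping the paper leaves implicit.
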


Informally, the \emph{Lovász Local Lemma} says that, given a set $\mathcal{X}$ of ``bad'' events in a probability space, if the probability that an event in $\mathcal{X}$ occurs is not very high and they are mostly independent from each other, then there is a configuration where no event in $\mathcal{X}$ occurs.
More precisely, assume that each event in $\mathcal{X}$ occurs with probability at most $p$ and is independent of all but at most $d$ other events in $\mathcal{X}$.
Then if $e \cdot p(d+1) \leq 1$ with positive probability no event in $\mathcal{X}$ occurs.
This statement is known as the \emph{symmetric LLL} and the original version is described on more general terms.

Although the LLL is a powerful tool to find good combinatorial structures in complex environments, the original proof~\cite{LovaszErdos1975} is not constructive.
Thus, even if we know that some desirable structure indeed exists, finding it might still be no trivial matter.
Fortunately, many advances were made in designing algorithms for the LLL.
Moser and Tardos~\cite{10.1145/1667053.1667060} showed a constructive proof of the LLL that ends in \emph{expected} polynomial time, and a deterministic polynomial-time algorithm for fixed $d$.
For the application of Masa\v{r}\'{\i}k et al.~\cite{Masarik2022} and, consequently, ours, the goal is to use the LLL to find multicoloured independent sets.
To bound $d$, they exploit the fact that the input graph has bounded degeneracy.
Their reliance on the LLL is one of the reasons why their proof cannot be immediately translated into a constructive algorithm, and the method used to bound $d$ is the reason why we cannot rely on the algorithm by Moser and Tardos~\cite{10.1145/1667053.1667060} for fixed $d$.
Thus we use the algorithmic version of the LLL by Chandrasekaran et al.~\cite[Theorem 5]{Chandrasekaran.Navin.Bernhard.09} (a more general result has been recently proven by Harris~\cite{Harris2022}), which requests a slightly stronger relation over $p$ and $d$.

We could've used the symmetric form of~\cite[Theorem 5]{Chandrasekaran.Navin.Bernhard.09}, which can be easily derived from the original statement, to show that there is a polynomial-time algorithm that finds a bramble of congestion at most $8$ in digraphs of large directed treewidth. However, in order to give a precise bound on the running time, we need to work with the full generality of their result.

\subsection{Finding multicoloured independent sets}\label{subsection:multicolored-independent-sets}
As mentioned in the previous sections, we have two options to find the multicolored independent sets in our setup.
We can apply the following result by Harris~\cite[Theorem 1.5]{Harris2022} to obtain \autoref{theorem:strongly-connected-implies-linkedness-using-harris}, which has a better bound for $t_k$ with the downside of not providing a precise bound of the running time.
\begin{proposition}[Harris~\cite{Harris2022}]\label{proposition:Harris-independent-set}
Let $G$ be a digraph with maximum degree $\Delta$ and $\mathcal{V}$ be a partition of $V(G)$ into sets $V_1, \ldots, V_r$ of size exactly $t$. Let $\varepsilon > 0$ be an arbitrary constant and suppose that $t \geq (2+\varepsilon)\Delta$. Then there is a polynomial-time algorithm that finds in $G$ an independent set $\{v_1, \ldots, v_r\}$ such that $v_i \in V_i$ for each $i \in [r]$.
\end{proposition}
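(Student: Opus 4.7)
My plan is to prove the statement via an algorithmic version of the Lov\'asz Local Lemma applied to the natural uniform sampling. First I would set up the experiment: independently for each $i \in [r]$, select a vertex $v_i \in V_i$ uniformly at random, and for every (underlying) edge $e = \{u,w\}$ of $G$ with $u \in V_i$ and $w \in V_j$, introduce the ``bad'' event $A_e$ that both endpoints of $e$ have been selected. A transversal independent set exists precisely when no bad event occurs, so it suffices to produce a configuration that avoids all $A_e$ simultaneously.

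Next I would estimate the LLL parameters. Each $A_e$ has probability $1/t^2$, since the two blocks are sampled independently; and $A_e$ is mutually independent from every $A_{e'}$ whose edge does not share a block with $e$, so the dependency graph of the bad events has maximum degree at most $2t\Delta$ (each block contains at most $t\Delta$ incident edges). Plugged directly into the symmetric LLL this only yields the weaker bound $t \geq 2e\Delta$. To reach the tight threshold $(2+\varepsilon)\Delta$, I would assign each $A_e$ a weight of the form $x_e = c/(t\Delta)$ with $c = c(\varepsilon) > 0$ chosen so that $\Pr[A_e] \leq x_e \prod_{e' \sim e}(1-x_{e'})$ holds, using the estimate $(1-c/(t\Delta))^{2t\Delta} \to e^{-2c}$; the slack afforded by $\varepsilon$ is exactly what makes this asymmetric inequality satisfiable.

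Finally, to turn this existence argument into the promised polynomial-time algorithm, I would use a Moser--Tardos-style resampling procedure: while some bad event $A_e$ is violated, resample the vertex choices in the two blocks containing the endpoints of $e$. The main obstacle is bounding the total number of resampling steps in the tight regime $t = (2+\varepsilon)\Delta$, where the classical Moser--Tardos analysis breaks down and a direct bound on witness-tree counts fails. This calls for a partial-resampling / ``LLL-distribution'' analysis in the spirit of Harris, in which one shows that witness trees of size $k$ contribute weight decaying geometrically in a function of $\varepsilon$, so that the expected number of resamplings is $O((n + |E(G)|)/\varepsilon)$. A subsequent derandomization via the method of conditional probabilities, sweeping block by block and maintaining a pessimistic estimator based on the weighted LLL criterion, then yields the deterministic polynomial-time algorithm claimed in the statement.
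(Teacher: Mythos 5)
There is a genuine gap, and it sits at the heart of your argument rather than in the details. The paper does not prove this proposition at all: it is imported verbatim as a black box from Harris (Theorem 1.5 of the cited work), precisely because the bound $t \geq (2+\varepsilon)\Delta$ is far beyond what a Lov\'asz Local Lemma computation of the kind you sketch can deliver. Run your own numbers: each bad event $A_e$ has probability $1/t^2$ and is dependent on at most $2t\Delta$ other events, so with uniform weights $x_e = c/(t\Delta)$ the asymmetric LLL condition $\Pr[A_e] \leq x_e \prod_{e' \sim e}(1-x_{e'})$ becomes (up to lower-order terms) $1/t^2 \leq (c/(t\Delta))\,e^{-2c}$, i.e.\ $t \geq (e^{2c}/c)\,\Delta$. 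The function $e^{2c}/c$ is minimized at $c = 1/2$ with value $2e$, so no choice of $c(\varepsilon)$ makes the criterion hold below $t \geq 2e\Delta \approx 5.44\,\Delta$; the ``slack afforded by $\varepsilon$'' does not rescue the inequality. Even the strongest known LLL-type criteria (cluster expansion \`a la Bissacot et al.) only push this to $t \geq 4\Delta$ for independent transversals. The threshold $2\Delta$ is Haxell's theorem, whose proofs are not LLL-based, and the algorithmic $(2+\varepsilon)\Delta$ version is exactly the content of Harris's result: it relies on Haxell-style augmenting/alternating-tree arguments combined with a bespoke resampling analysis, not on verifying a weighted LLL criterion and counting witness trees.

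Consequently, your third step inherits the same problem: once the LLL criterion fails in the regime $t = (2+\varepsilon)\Delta$, there is no pessimistic estimator based on that criterion to derandomize, and the claim that witness trees ``decay geometrically in a function of $\varepsilon$'' with expected $O((n+|E(G)|)/\varepsilon)$ resamplings is precisely the hard technical core of Harris's paper, not something that follows from the sketch --- indeed the present paper explicitly remarks (citing personal communication with Harris) that even extracting an explicit running-time bound from that machinery requires significant technical work, which is why it also develops the alternative route via Chandrasekaran et al.\ with the weaker polynomial-LLL condition. If you want a self-contained proof along your lines, you can honestly obtain the statement with $t \geq (2e+\varepsilon)\Delta$ (or $(4+\varepsilon)\Delta$ with cluster expansion) via Moser--Tardos plus standard derandomization, but the constant $2+\varepsilon$ as stated requires citing or reproducing Harris's argument.
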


It is easy to apply \autoref{proposition:Harris-independent-set} under the assumption that $G[V_i \cup V_j]$ is $b$-degenerate for any distinct $i,j \in [r]$. It suffices to notice that, since there are at most $2t(r-1)b$ edges leaving each $V_i$ in this case, at least $t/2$ vertices of each such set have maximum degree at most $4t(r-1)b$. Thus one can apply \autoref{proposition:Harris-independent-set},  for example,  under the assumption that $t \geq 17(r-1)b$. 
Since this bound on $t$ is within a constant factor of the original conditions needed to apply~\cite[Lemma 2.5]{Masarik2022} (which asks that $t \geq 4e(r-1)b$), to obtain \autoref{theorem:strongly-connected-implies-linkedness} it suffices to analyse the running time of the remaining steps of the original construction, as done in the end of \autoref{sec:algorithmic-construction}.

To obtain \autoref{theorem:strongly-connected-implies-linkedness}, additional work is need to apply~\cite[Theorem 5]{Chandrasekaran.Navin.Bernhard.09}.
We need the following definitions.
Let $\mathcal{P} = \{P_1\, \ldots, P_n\}$ be a collection of mutually independent discrete random variables and, for $i \in [n]$, let $D_i$ be the finite domain of $P_i$.
Let $\mathcal{A} = \{A_1, \ldots, A_m\}$ be a collection of events each of which is determined by a subset of $\mathcal{X}$.
The \emph{variable set} of an event $A \in \mathcal{A}$ is the unique minimal $\mathcal{S} \subseteq \mathcal{P}$ that determines $A$ and is denoted by $\vbl(A)$.
We denote by $\Gamma(A)$ the set $\{B \in \mathcal{A} \mid \vbl(A) \cap \vbl(B) \neq \emptyset\}$.
Thus, $\Gamma(A)$ denotes the set of events of $\mathcal{A}$ which are \textsl{not} independent from $A$.

Let $x : \mathcal{A} \to (0,1)$ and
\begin{itemize}
  \item $x'(A) = x(a) \cdot \prod_{B\in \Gamma(A)} (1 - x(B))$,
  \item $D = \max_{P_i \in \mathcal{P}}|D_i|$,
  \item $M = \max \{n, 4m, 2\sum\frac{2|\vbl(A)|}{x'(A)}\cdot\frac{x(A)}{1-x(A)}, \max_{A \in \mathcal{A}}\frac{1}{x'(A)}\}$, and
  \item $w_{\text{min}} = \min_{A \in \mathcal{A}}{-\log x'(A)}$.
\end{itemize}
With those definitions \cite[Theorem 5]{Chandrasekaran.Navin.Bernhard.09} can be stated as follows.
The goal is to use $\mathcal{A}$ to describe the set of bad events that we wish to avoid.
Thus, we say that an evaluation of the random variables is a \emph{good evaluation} if no event in $\mathcal{A}$ occurs.

\begin{proposition}[Chandrasekaran et al.~\cite{Chandrasekaran.Navin.Bernhard.09}]\label{proposition:LLL-polynomial-time}
For $I \subseteq [n]$, let $t_C$ be the time needed to compute the conditional probability $\Prob(A \mid \forall i \in I, P_i = v_i)$, for any $A \in \mathcal{A}$ and any partial evaluation $\{v_i \in D_i \mid i \in I\}$.
Assume that there is an $\varepsilon \in (0,1)$ and an assignment of reals $x: \mathcal{A} \to (0,1)$ such that for all $A \in \mathcal{A}$
\begin{equation}\label{eq:poly-LLL-complete-conditions}
\Prob(A) \leq (x'(A))^{1+\varepsilon} = \left(x(A)  \prod_{B \in \Gamma(A)}(1-x(B)) \right)^{1+\varepsilon}.
\end{equation}
Then there is a deterministic algorithm that finds a good evaluation in time
\begin{equation}\label{poly-LLL-time}
\mathcal{O}\left(t_C \cdot \frac{DM^{3+2/\varepsilon} \cdot \log^2 M}{\varepsilon^2 \cdot w^2_{\text{min}}} \right).
\end{equation}
\end{proposition}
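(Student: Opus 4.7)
The plan is to derive this as a derandomization of the Moser--Tardos resampling framework. As a baseline, the randomized Moser--Tardos algorithm starts from an arbitrary assignment of the variables $P_1, \dots, P_n$ and, while some bad event $A \in \mathcal{A}$ currently holds, resamples all variables in $\vbl(A)$. Under the classical LLL hypothesis $\Pr(A) \le x(A)\prod_{B \in \Gamma(A)}(1 - x(B))$, the witness-tree argument of Moser and Tardos gives that the expected number of resamplings is at most $\sum_{A} x(A)/(1-x(A))$, and each resampled event is ``charged'' to a unique labelled tree whose probabilistic weight is exactly $x'(A)$ times a product over its descendants.

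To derandomize, I would apply the method of conditional probabilities with a pessimistic estimator built from witness trees. The strengthened hypothesis \eqref{eq:poly-LLL-complete-conditions} with slack $\varepsilon$ is the crucial ingredient: the extra factor $(x'(A))^{\varepsilon}$ supplies enough room to define an estimator $\Phi$ that upper-bounds the probability that \emph{some} bad event still occurs given the partial assignment so far. Concretely, $\Phi$ would be a sum, over events $A$ and over witness trees of depth at most $O(\log M / \varepsilon)$ rooted at $A$, of their truncated probabilistic weights; the slack $\varepsilon$ ensures that this truncation loses only a $1/\poly(M)$ factor, which is where the $M^{3+2/\varepsilon}$ term in \eqref{poly-LLL-time} arises.

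The algorithm then proceeds in rounds: at each round, pick any currently-violated event $A$ and assign the variables in $\vbl(A)$ one at a time, for each variable choosing a value $v \in D_i$ that does not increase $\Phi$. Existence of such a value follows from the standard conditional-expectation argument applied to $\Phi$, using $t_C$ time per trial for computing the relevant conditional probabilities; trying all $D$ values of the largest domain gives the $t_C \cdot D$ factor per elementary step. Once $\Phi$ drops below $1$ we are guaranteed a good evaluation, and the total number of rounds is controlled by $M$ and the logarithm of $1/x'(A) \ge w_{\min}$, which accounts for the $\log^2 M / (\varepsilon^2 w_{\min}^2)$ factor.

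The main obstacle is the design and analysis of the estimator $\Phi$: it has to simultaneously (i) bound the probability of every residual bad event under the uniform measure on the unfixed variables, (ii) remain a valid pessimistic estimator under conditioning on any partial assignment, and (iii) admit an efficient incremental update. Properties (i) and (ii) are what forces the truncated-witness-tree form, and property (iii) is what forces tracking the parameters $M$ and $w_{\min}$ explicitly. Once $\Phi$ is in place, the remaining argument is routine bookkeeping: multiply the number of rounds by the cost per round and substitute the bounds on $M, D, \varepsilon, w_{\min}$ to obtain \eqref{poly-LLL-time}.
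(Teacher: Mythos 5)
The paper itself offers no proof of this proposition: it is imported verbatim (in adapted notation) as Theorem~5 of Chandrasekaran et al.\ and used purely as a black box, so there is no internal argument to compare yours against. Your outline does track the strategy of the cited work: derandomize Moser--Tardos by the method of conditional probabilities over witness trees, with the $\varepsilon$-slack in \eqref{eq:poly-LLL-complete-conditions} allowing the trees to be truncated at depth roughly $O(\log M/\varepsilon)$ so that only polynomially many (about $M^{O(1/\varepsilon)}$) of them need to be tracked --- this is indeed the source of the $M^{3+2/\varepsilon}$ factor in \eqref{poly-LLL-time}.

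As a proof, however, the proposal has a genuine gap: the pessimistic estimator $\Phi$, which is the entire substance of the argument, is never constructed --- you only list the three properties it would need to satisfy, and establishing that a truncated-witness-tree sum actually has properties (i)--(iii) is precisely what Chandrasekaran et al.\ have to work for. Moreover, the actual deterministic algorithm does not interleave resampling rounds with conditional choices as you describe; it fixes the Moser--Tardos random table entry by entry so as to rule out all partial witness trees above a weight threshold, and only afterwards argues that the sequential algorithm run on this table halts with a good evaluation within the stated time. Your quantitative claims --- that the truncation loses only a $1/\mathrm{poly}(M)$ factor, that a value not increasing $\Phi$ always exists and can be found with $t_C\cdot D$ work per variable, and that the round count yields the $\log^2 M/(\varepsilon^2 w_{\text{min}}^2)$ term --- are asserted rather than derived. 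So the text should be read as a correct plan of attack consistent with the cited reference, not as a self-contained proof of the proposition.
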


We now use \autoref{proposition:LLL-polynomial-time} to derive an algorithmic version of a result by Reed and Wood~\cite[Lemma 4.3]{REED2012374}.

\begin{theorem}\label{theorem:poly-LLL-independent-sets}
Let $r$ be a positive integer and $b$ a positive real.
Let $G$ be a graph and $\mathcal{V}$  be a partition of $V(G)$ into sets $V_1, \ldots, V_r$ of size exactly $t$.
Suppose that
\begin{equation}\label{eq:poly-LLL}
t^{1-\varepsilon} \geq \left(e \cdot 4b(r-1)\right)^{1+\varepsilon}
\end{equation}
for some $\varepsilon > 0$ and that $G[V_i \cup V_j]$ is $b$-degenerate for any distinct $i,j \in [r]$.
Then $G$ contains an independent set $\{v_1, \ldots, v_r\}$ such that $v_i \in V_i$ for each $i \in [r]$ that is found in time
\begin{equation}\label{eq:running-time-poly-LLL}
\mathcal{O}\left(r \cdot t \cdot (r^2\cdot b \cdot t)^{3+2/\varepsilon} \cdot \frac{\log^2(r^2 \cdot b \cdot t)}{\log^2(r \cdot b \cdot t)}\right).
\end{equation}
\end{theorem}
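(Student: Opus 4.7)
The plan is to apply \autoref{proposition:LLL-polynomial-time} to a natural probability space over transversals of $\mathcal{V}$. Concretely, let $\mathcal{P} = \{P_1, \ldots, P_r\}$, where each $P_i$ is drawn independently and uniformly from $D_i := V_i$, so $|D_i| = t$; the candidate independent transversal is $\{P_1, \ldots, P_r\}$, and the only obstructions are cross-part edges of $G$ both of whose endpoints are selected. Accordingly, for every edge $e = \{u,v\}$ of $G$ with $u \in V_i$, $v \in V_j$, $i \neq j$, I take the bad event $A_e = \{P_i = u \text{ and } P_j = v\}$; then $\vbl(A_e) = \{P_i, P_j\}$, $|\vbl(A_e)| = 2$, and $\Prob(A_e) = 1/t^2$.

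Next I would estimate the dependency and verify \eqref{eq:poly-LLL-complete-conditions}. Because $G[V_i \cup V_j]$ is $b$-degenerate we have $|E(G[V_i \cup V_j])| \leq 2bt$, so $|\mathcal{A}| \leq \binom{r}{2}\cdot 2bt$. Two events $A_e, A_{e'}$ share a variable iff the edges $e, e'$ share a part, hence $|\Gamma(A_e)| \leq 4b(r-1)t$; call this bound $d$. Choosing the weights uniformly, $x(A) = 1/(d+1)$, the standard inequality $(1-1/(d+1))^{d} > 1/e$ gives $x'(A) > 1/(e(d+1))$ and also $x'(A) \leq x(A) = 1/(d+1)$. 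Thus \eqref{eq:poly-LLL-complete-conditions} will hold as soon as $1/t^2 \leq (1/(e(d+1)))^{1+\varepsilon}$, which after absorbing the additive $+1$ in $d+1$ is precisely the hypothesis $t^{1-\varepsilon} \geq (4eb(r-1))^{1+\varepsilon}$. Hence \autoref{proposition:LLL-polynomial-time} produces in deterministic time \eqref{poly-LLL-time} an evaluation with no bad event, which is the sought independent transversal.

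It remains to convert \eqref{poly-LLL-time} into \eqref{eq:running-time-poly-LLL} by estimating each parameter. The uniform choice of $x$ gives $x(A)/(1-x(A)) = 1/d$ and $1/x'(A) \leq e(d+1)$, so $\sum_{A} \frac{2|\vbl(A)|}{x'(A)}\cdot \frac{x(A)}{1-x(A)} = \Ocal(|\mathcal{A}|)$, and every term entering the definition of $M$ is dominated by $|\mathcal{A}| = \Ocal(r^2 bt)$; the bound $x'(A) \leq 1/(d+1)$ gives $w_{\text{min}} \geq \log(d+1) = \Omega(\log(brt))$; and $D = t$. The conditional probability $\Prob(A_e \mid \cdot)$ is decided by reading the two coordinates of $\vbl(A_e)$ from the partial assignment, which combined with the bookkeeping over the $r$ variables inside the algorithm of Chandrasekaran et al.~\cite{Chandrasekaran.Navin.Bernhard.09} makes $t_C$ contribute an $\Ocal(r)$ factor. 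Substituting these estimates into \eqref{poly-LLL-time} yields \eqref{eq:running-time-poly-LLL}.

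The hard part will be the running-time bookkeeping, since the exponent $3 + 2/\varepsilon$ amplifies any slack in $M$: the dependency estimate $|\Gamma(A_e)| \leq 4b(r-1)t$ must be essentially tight (with care around whether $A_e \in \Gamma(A_e)$ and around overcounting edges incident to both $V_i$ and $V_j$), and the polynomial factors coming from $t_C$, $D$, and $w_{\text{min}}$ must be pinned down carefully in order to land exactly on \eqref{eq:running-time-poly-LLL}. Verifying the LLL hypothesis itself, by contrast, is just algebraic rearrangement of \eqref{eq:poly-LLL}.
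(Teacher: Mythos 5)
Your proposal is correct and follows essentially the same route as the paper's proof: the same product space with one uniform variable per part, one bad event of probability $1/t^2$ per cross-part edge, the dependency bound $d = 4b(r-1)t$ derived from the pairwise $b$-degeneracy, uniform weights $x$, and the same term-by-term estimation of $t_C$, $D$, $M$, and $w_{\text{min}}$ to extract the stated running time from \autoref{proposition:LLL-polynomial-time}. The only (cosmetic) difference is your choice $x(A) = 1/(d+1)$ with $(1-\tfrac{1}{d+1})^{d} \geq 1/e$ instead of the paper's $x(A) = 1/d$, which if anything is the cleaner way to verify \autoref{eq:poly-LLL-complete-conditions}.
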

\begin{proof}
For each $i \in [r]$, let $P_i$ be a random variable that picks as value, uniformly at random, a vertex from $V_i$.
Then, each $v_i$ is chosen with probability $1/t$.
Let $F$ be the set of edges of $G$ with endpoints in distinct sets of $\mathcal{V}$.
By the degeneracy assumption, we conclude that $|F| \leq r^2 \cdot b \cdot 2t$.
Now, for each $(u,v) \in F$ if $A_{u,v}$ denotes the event that both endpoints of $e$ are taken then $\Prob(A_{u,v}) = 1/t^2$.
The goal is to find a good evaluation, that is, a choice of a vertex $v_i \in V_i$ for each $i \in [r]$, which avoids every event of the set $\mathcal{A} = \{A_{u,v} \mid (u,v) \in F\}$ of ``bad'' events.

By the assumption that each $G[V_i \cup V_j]$ is $b$-degenerate and since every $b$-degenerate graph with $\ell$ vertices has at most $b \cdot \ell$ edges, we conclude that, for each $i \in [r]$, the number of edges with exactly one endpoint in $V_i$ is at most $(r-1)b \cdot 2t$.
This, in turn, implies that $|\Gamma(A_{u,v})| \leq 4b(r-1)$ since, by definition, $A_{x,y} \in \Gamma(A_{u,v})$ if $\{x,y\} \cap \{u,v\} \neq \emptyset$ and $\{x,y\} \neq \{u,v\}$.
The intuition here is that, for $u \in V_i$ and $v \in V_j$ with $i \neq j$, the event $A_{u,v}$ is independent from every other event in $\mathcal{A}$ associated with an edge $e \in F$ with endpoints in sets of $\mathcal{V}$ distinct from both $V_i$ and $V_j$, and thus the degeneracy assumption implies that the bad events are reasonably independent from each other.
Defining $d = 4b(r-1)t$, we are ready to invoke \autoref{proposition:LLL-polynomial-time}.

Since every $\Prob(A_{u,v}) = 1/t^2$, to apply \autoref{proposition:LLL-polynomial-time} it suffices to choose $x(A_{u,v}) = 1/d$ for all $(A_{u,v} \in \mathcal{A})$.
Now, we want that
\[
  \frac{1}{t^2} \leq \left(x(A) \prod_{B \in \Gamma(A_{u,v})}(1-x(B))\right)^{1+\varepsilon}.
\]
Since $|\Gamma(A_{u,v})| \leq d$ and $(1 - 1/d)^d \leq 1/e$ we can rewrite the above inequality as
\[
  t^2 \geq (d \cdot e)^{1+\varepsilon},
\]
and thus the choice of $t$ guarantees that the conditions of \autoref{proposition:LLL-polynomial-time} are satisfied and a good evaluation is found in polynomial time.

It remains to analyse the running time of the algorithm and, to do this, we go over each of the dependencies one by one.
First, the time $t_C$ to compute the conditional probabilities as in the statement of the theorem depends only on $r$ since it suffices to check, for each $A_{u,v} \in \mathcal{A}$, if and which vertices of the sets of $\mathcal{V}$ containing $u$ and $v$ have been chosen.
Thus $t_C = \mathcal{O}(r)$.

By definition, $x'(A) = 1/(d \cdot e)$, $D = t$ since each set $V_i$ has size $t$, and $|\vbl(A_{u,v})| = 2$ since every event $A_{u,v}$ is determined by the two distinct random variables $P_i, P_j$ such that $u\in V_i$ and $v \in V_j$.
Let $m = |F|$.
Then
\[2\cdot\sum_{A_{u,v}\in \mathcal{A}}\frac{2|\vbl(A_{u,v})|}{x'(A_{u,v})} \cdot \frac{x(A_{u,v})}{1-x(A_{u,v})} = 2m\cdot\frac{4d\cdot e}{d-1},\]
which in turn implies that $M = \mathcal{O}(r\cdot d) = \mathcal{O}(r^2 \cdot b \cdot t) = \mathcal{O}(m)$ since $m \leq r^2 \cdot b \cdot 2t$ by the degeneracy assumption.
Finally, $w_{\text{min}} = \log d\cdot e$ implies that the running time of the algorithm is
\[
\mathcal{O}\left(r \cdot t \cdot m^{3+2/\varepsilon} \cdot \frac{\log^2(m)}{\log^2(d)}\right)
\]
and the result follows.
\end{proof}
If \autoref{eq:poly-LLL} holds with respect to some choice of $t, b, r$, and $\varepsilon$, we may say that \emph{$t$ satisfies the polynomial LLL conditions}.
Clearly, this is stronger than the classical LLL requirements (which, in this case, would imply $t \geq e\cdot2b\cdot(r-1)$) and thus whenever we satisfy the former, the existential claims in~\cite{Masarik2022} relying on the latter are still guaranteed to hold, as well as properties of the constructed objects which are obtained from the construction.
This allows us to skip a large part of the original analysis, focusing only on the parts that we need to adapt.

\subsection{The algorithmic construction}\label{sec:algorithmic-construction}
In this section we give an sketch of the proof of \autoref{theorem:constant-congestion-brambles-poly-time}.
The goal is to guide the reader through a high-level overview of the original proof by \masarik et al.~\cite{Masarik2022} while highlighting which parts and which bounds on the requested parameters need to be changed, and point out which steps have a significant impact on the running time.

The \emph{intersection graph} of a family of sets $\mathcal{A}$, denoted by $\Int(\mathcal{A})$, is the undirected graph with vertex set $\{A \mid A \in \mathcal{A}\}$ where two distinct vertices $A,A'$ are adjacent if and only if $A \cap A' \neq \emptyset$.
Given two linkages $L$ and $L'$ of  size $\ell$ and paths of a digraph $D$, clearly one can construct $\Int(L \cup L')$ in time $\Ocal(n \cdot \ell^2)$ by simply verifying intersections between paths of the linkages.
We say that a family $\mathcal{W}$ of closed walks in a digraph $D$ has \emph{congestion} $\alpha$ if every vertex of $D$ appears in at most $\alpha$ walks of $\mathcal{W}$.
The construction often mentions and uses an invariant of (closed) walks that the authors of \cite{Masarik2022} call \emph{overlap}.
Informally, the overlap is an upper bound on how many times a single vertex can appear in the (closed) walks, and this property is used to maintain control on the congestion of the obtained bramble in the end.

Let $a,b,d_1,d_2$, and $d_3$ be positive integers, to be precised later.
We follow the construction of~\cite{Masarik2022}, at each point justifying the choice of the values for $b,d_1, d_2$, and $d_3$, as they are needed, in order to apply the polynomial LLL given in \autoref{theorem:poly-LLL-independent-sets}.
The choice of $a$ is irrelevant for us, since it is only used when computing the size of the obtained bramble, and here we do not repeat these calculations.
Let $c_{a,b}$ be the constant of \autoref{proposition:campos-dtw-to-path-system} and assume that $D$ is a digraph with $\dtw(D) \geq c_{a,b}\cdot a^2 \cdot b^2$.
Thus $D$ contains an $(a,b)$-path system $\mathcal{(P, L, A)}$.
By \autoref{proposition:campos-dtw-to-path-system}, this step can be done in \FPT time with parameter $\max\{a,b\}$ (or in polynomial time if $D$ is $\max\{a^2,b^2\}$-strong applying \autoref{proposition:path-systems-strong-digraphs}), and we argue that every other step from this point forward can be done in polynomial time.
We remark that we make no attempt to optimize the choice of the following values, but only show that at least one good choice is possible for each of them.

Although we frequently mention \emph{threaded linkages} and \emph{untangled threaded linkages}, as introduced in~\cite{Masarik2022}, we do not define these objects since they are not necessary to understand the changes that we make to the proof of \cite{Masarik2022}.
The goal of these mentions is to guide the reader to which part of the original proof we are referring to.

Let $G$ be a graph with one vertex for each linkage $L_{i,j}$ of the path system $\mathcal{(P, L, A)}$ with $i \neq j$ and no edges.
Formally, $V = \{(i,j) \in [a] \times [a] \mid i \neq j\}$ and $G = (V, \emptyset)$.
The authors of~\cite{Masarik2022} apply \cite[Lemma 3.5]{Masarik2022} to $\mathcal{(P, L, A)}$ and thus associate each vertex of $G$ with a threaded linkage of size $b$, as defined in~\cite{Masarik2022}.
The goal now is to apply \cite[Lemma 3.6]{Masarik2022} to each of those objects to associate with each vertex of $G$  either a set of closed walks with additional properties, including small congestion, or with an untangled threaded linkage.
For that, we need that $b \geq x\cdot d + (d-1)$, for some choice of $x$ and $d$, in order to conclude that the obtained sets of closed walks have size $d$ and the untangled threaded linkages have size at least $x$.
Let $\alpha > 1$.
Choosing
\begin{align*}
b &= \left\lceil (e \cdot 4a^2 \cdot d^2_1)^{\alpha}\right\rceil,\\
x &= (e     \cdot 4a^2 \cdot d_1)^{\alpha} + 1, \text{ and}\\
d &= \frac{d_1^{\alpha}}{2^9 \cdot 5}
\end{align*}
we conclude that
\begin{align*}
x\cdot d + (d - 1) &= \left((e \cdot 4a^2 \cdot d_1)^{\alpha} + 1\right) \cdot \frac{d_1^{\alpha}}{2^9 \cdot 5} + \frac{d_1^{\alpha}}{2^9 \cdot 5} - 1\\
&= \frac{(e \cdot 4a^2 \cdot d_1)^{\alpha}\cdot d_1^{\alpha}}{2^9 \cdot 5} + \frac{d_1^{\alpha}}{2^9 \cdot 5} + \frac{d_1^{\alpha}}{2^9 \cdot 5} - 1\\
&= \frac{ (e \cdot 4a^2 \cdot d^2_1)^{\alpha}}{2^9 \cdot 5} + \frac{d_1^{\alpha}}{2^8\cdot 5} - 1 \\
&\leq (e \cdot 4a^2 \cdot d^2_1)^{\alpha} \leq b.
\end{align*}
Thus, the lemma is applicable.
The set of vertices of $G$ for which the first outcome is obtained, namely the closed walks, is named $Z$.
As a consequence of applying \cite[Lemma 3.6]{Masarik2022} to each $(i,j) \in Z$, we can also associate each such vertex with a \emph{threaded linkage} satisfying some additional properties.
Whether each $(i,j) \in Z$ is used to identify one of such threaded linkages or a closed walk depends on the application.
For now,
\begin{itemize}
  \item each $(i,j) \in Z$ is associated with a collection of closed walks $Z_{i,j}$ and with a threaded linkage $L'_{i,j}$, both of size $d_1^{\alpha}/(2^9 \cdot 5)$, and
  \item each $(i,j) \in V \setminus Z$ is associated with an untangled threaded linkage of size at least $(e \cdot 4a^2 \cdot d_1)^{\alpha} + 1$.
\end{itemize}
We remark that, so far, we have applied \cite[Lemmas 3.5 and 3.6]{Masarik2022} a total of $a(a-1)$ times each.

They now build two undirected graphs on top of $V$, each depending on the choices of $d_1$ and $d_2$ with $d_1 \geq d_2$.
These choices are done in order to satisfy the LLL conditions, and thus need to be adapted to satisfy the {\sl polynomial} LLL conditions.
Start with $E_1 = E_2 = \emptyset$.
For each pair $(i,j), (i', j') \in V$ and $i \in \{1,2\}$, add the pair to $E_i$ if $\Int(L'_{i,j} \cup L'_{i', j'})$ is {\sl not} $d_i$-degenerate and define $H_i = (V, E_i)$.
Finally, let $M_1$ be a maximum matching in $H_1 \setminus Z$ and $M_2$ be a maximum matching in $H_2$ excluding every edge with both endpoints in $Z$ plus vertices used by $M_1$.
For $i \in \{1,2\}$ let $V(M_i)$ be the set of vertices used by the edges of $M_i$.
It is shown~\cite[Claim 4.1]{Masarik2022} that at least one of these three cases occurs:
\begin{enumerate}
  \item $|V \setminus (V(M_1) \cup Z)| \geq 0.6|V|$,
  \item $|V(M_1) \cup V(M_2) \cup Z| \geq 0.6|V|$, or
  \item $|V \setminus V(M_2)| \geq 0.6|V|$.
\end{enumerate}
In each of those cases, the goal is to apply one of the two following winning schemes.

The algorithmic version of the \emph{sparse winning scenario} is obtained by simply following the original proof and applying~\autoref{proposition:degeneracy-clique-minor} instead of~\cite[cf. Lemma 2.4]{Masarik2022}.
\begin{lemma}[sparse winning scenario~{\cite[Lemma 3.2]{Masarik2022}}, algorithmic version]\label{lemma:sparse-winning-scenario}
There is an absolute constant $c$ with the following property.
Let $D$ be a digraph, let $a,b \geq 1$ be integers and $(P, \mathcal{L}, \mathcal{A})$ be an $(a,b)$-path system in $D$.
Let $\mathcal{I}$ be a subset of $[a] \times [a] \setminus \{(i,i) \mid i \in [a]\}$ with size at least $0.6a(a-1)$, and assume that for every $(i,j) \in \mathcal{I}$ there is a path $P_{i,j}$ from $A^{\textsf{\emph{out}}}_i$ to $A^{\textsf{\emph{in}}}_j$ such that $\{P_{i,j} \mid (i,j) \in \mathcal{I}\}$ has congestion at most $\alpha$.
Then we can find  in polynomial time a bramble in $D$ of congestion at most $2 + 2\alpha$ and size at least
\[c \left(\frac{\sqrt{a}}{\sqrt[4]{\log a}}\right).\]
\end{lemma}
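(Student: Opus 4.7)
The plan is to follow the original proof of \cite[Lemma 3.2]{Masarik2022} verbatim, noticing that its only non-constructive ingredient is the invocation of the Thomason--Kostochka bound to extract a large clique minor from a graph of large average degree. All other ingredients---a counting argument, the formation of an auxiliary undirected graph recording bidirectional reachability, and the assembly of the bramble from a clique minor in that graph---already translate into polynomial-time procedures. Hence, to upgrade the statement to an algorithmic version it suffices to replace the existential minor extraction by the polynomial-time algorithm of \autoref{proposition:degeneracy-clique-minor}.

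More concretely, first I would build an auxiliary graph $H$ on vertex set $[a]$ in which $\{i,j\}$ is an edge if and only if both $(i,j)$ and $(j,i)$ belong to $\mathcal{I}$; this bidirectionality is what will eventually provide arcs in both directions between pairs of bags of the bramble. A straightforward counting argument---using that $|\mathcal{I}| \geq 0.6\,a(a-1)$ and that each unordered pair contributes at most two ordered pairs to $\mathcal{I}$---gives $|E(H)| = \Omega(a^2)$, so $H$ has average degree $\Omega(a)$. After passing in linear time to a subgraph of minimum degree close to the average, the degeneracy of $H$ is also $\Omega(a)$, and applying \autoref{proposition:degeneracy-clique-minor} produces in polynomial time a $K_t$-minor in $H$ with $t = \Omega\bigl(\sqrt{a}/\sqrt[4]{\log a}\bigr)$. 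The exact shape of this bound is inherited from the calculation in \cite{Masarik2022}.

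From a $K_t$-minor in $H$ with branch sets $V_1, \ldots, V_t$, I would assemble the bramble exactly as in the original proof: for each $r \in [t]$ the bag $B_r$ consists of the paths $P_i$ with $i \in V_r$ together with selected paths $P_{i,j}$ with $i,j \in V_r$ that make $B_r$ strongly connected, whose existence is guaranteed by the connectivity of $H[V_r]$ and by the fact that every edge of $H$ witnesses paths in both directions. Similarly, for every minor-edge between $V_r$ and $V_s$ the corresponding pair $\{i,j\}$ yields paths $P_{i,j}$ and $P_{j,i}$ giving arcs in both directions between $B_r$ and $B_s$, so $\{B_1, \ldots, B_t\}$ is indeed a bramble of $D$ of the required size.

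The main obstacle is not algorithmic but lies in the careful bookkeeping of congestion: by construction each $P_i$ lies in a single bag, while each selected $P_{i,j}$ is used either inside one branch set or as a connecting witness, so it contributes to at most two adjacent bags. Combining this with the assumption that $\{P_{i,j} : (i,j)\in\mathcal{I}\}$ has congestion at most $\alpha$ yields the advertised congestion bound of $2 + 2\alpha$, by the same accounting as in \cite{Masarik2022}; the additive constant $2$ reflects the possibility that vertices of the underlying path system may serve as endpoints of the $P_{i,j}$'s and thus be charged by bags on both sides of a connecting path.
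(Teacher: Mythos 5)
Your top-level plan coincides with the paper's: the proof of \autoref{lemma:sparse-winning-scenario} is exactly ``rerun the proof of \cite[Lemma 3.2]{Masarik2022} and replace the existential Thomason--Kostochka step by the algorithm of \autoref{proposition:degeneracy-clique-minor}''. The problem is in your reconstruction of that original argument, which contains a genuine gap in the assembly step. Taking one bag $B_r$ per branch set $V_r$ and arguing that, for a minor edge between $V_r$ and $V_s$, the paths $P_{i,j}$ and $P_{j,i}$ ``give arcs in both directions between $B_r$ and $B_s$'' does not satisfy \autoref{def:brambles-digraphs}: $P_{i,j}$ is a path from $A^{\textsf{out}}_i$ to $A^{\textsf{in}}_j$ whose internal vertices lie in neither bag, so its existence provides no arc of $D$ between $V(B_r)$ and $V(B_s)$. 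If you instead place $P_{i,j}$ inside $B_r$ to force an intersection with $B_s$, the bag acquires a dangling path and is no longer strongly connected; repairing that by also absorbing the return connection and the relevant piece of $P_j$ makes bags share whole paths of $\mathcal{P}$ with all of their minor-neighbours, and the congestion is then no longer bounded by a constant. So the construction as you describe it does not produce a bramble of congestion $2+2\alpha$.

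This is precisely why the original proof does not stop at the clique minor. From degeneracy $\Omega(a)$ one gets a $K_t$-minor with $t=\Omega(a/\sqrt{\log a})$ (not $\Omega(\sqrt{a}/\sqrt[4]{\log a})$, as you state), and the branch sets are then regrouped into only about $\sqrt{t}$ bags -- essentially arranging them in a $\sqrt{t}\times\sqrt{t}$ array and letting each bag consist of the paths $P_i$ and the connecting paths $P_{i,j},P_{j,i}$ along one row and one column -- so that any two bags \emph{share} a branch set and the ``arcs in both directions'' clause is never needed. In that construction each $P_i$ lies in at most two bags (its row bag and its column bag) and each $P_{i,j}$ in at most two bags, which is where the bound $2+2\alpha$ comes from (not from endpoints of the $P_{i,j}$'s), and the loss of a square root in the regrouping is exactly why the lemma promises size $c\sqrt{a}/\sqrt[4]{\log a}$ rather than $\Omega(a/\sqrt{\log a})$. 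Your remark that ``the exact shape of this bound is inherited from the calculation in \cite{Masarik2022}'' papers over this missing folding step; with it restored, the rest of your argument (counting edges of $H$, passing to a subgraph of large minimum degree, and invoking \autoref{proposition:degeneracy-clique-minor}) is fine and matches the paper.
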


Similarly, the algorithmic version of the \emph{sparse winning scenario (wrapped)} is obtained by following the original proof and adjusting the bounds to apply~\autoref{theorem:poly-LLL-independent-sets} instead of~\cite[Lemma 2.5]{Masarik2022}.
\begin{lemma}[sparse winning scenario, wrapped~{\cite[Lemma 3.3]{Masarik2022}}, algorithmic version]\label{lemma:sparse-winning-scenario-wrapped}
Let $c$ be the constant from \autoref{lemma:sparse-winning-scenario}.
Let $D$ be a digraph, let $a,b \geq 1$ be integers and $(P, \mathcal{L}, \mathcal{A})$ be an $(a,b)$-path system in $D$.
Let $\mathcal{I}$ be a subset of $[a] \times [a] \setminus \{(i,i) \mid i \in [a]\}$ with size at least $0.6a(a-1)$, and assume that there is an integer $d$ such that, for every distinct pair $(i,j),(i', j') \in \mathcal{I}$ the intersection graph $\Int(L_{i, j} \cup L_{i', j'})$ is $d$-degenerate.
If $b^{1-\varepsilon} \geq (e \cdot 4a^2 \cdot d)^{1+\varepsilon}$, then in polynomial time we can find in $D$ a bramble of congestion at most $4$ and size at least
\[c \left(\frac{\sqrt{a}}{\sqrt[4]{\log a}}\right).\]
\end{lemma}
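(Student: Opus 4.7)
The plan is to reduce \autoref{lemma:sparse-winning-scenario-wrapped} to an application of \autoref{lemma:sparse-winning-scenario} with congestion parameter $\alpha = 1$. Concretely, I would pick exactly one path $P_{i,j} \in L_{i,j}$ for each $(i,j) \in \mathcal{I}$ in such a way that the paths in the resulting family are pairwise vertex-disjoint in $D$; feeding this family into \autoref{lemma:sparse-winning-scenario} then yields in polynomial time a bramble of $D$ with congestion at most $2 + 2 \cdot 1 = 4$ and size at least $c \cdot \sqrt{a}/\sqrt[4]{\log a}$, which is what the lemma asks for. The natural way to make such a choice is to recast it as a multicoloured independent set problem and invoke \autoref{theorem:poly-LLL-independent-sets}.

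To set this up, I would build an auxiliary graph $G$ whose vertex set is the disjoint union $\bigcup_{(i,j) \in \mathcal{I}} V_{i,j}$, where each colour class $V_{i,j}$ contains one vertex per path of the linkage $L_{i,j}$ (so $|V_{i,j}| = b$), and where two vertices from distinct colour classes are made adjacent iff the two corresponding paths share at least one vertex of $D$. By construction, $G[V_{i,j} \cup V_{i',j'}]$ is isomorphic to a subgraph of $\Int(L_{i,j} \cup L_{i',j'})$, and the latter is $d$-degenerate by hypothesis for every distinct pair $(i,j),(i',j') \in \mathcal{I}$. Setting $r = |\mathcal{I}|$ and $t = b$, the inequality $r - 1 < a^2$, which holds since $|\mathcal{I}| \leq a(a-1)$, lets me absorb the hypothesis $b^{1-\varepsilon} \geq (e \cdot 4 a^2 \cdot d)^{1+\varepsilon}$ of the present lemma into the polynomial LLL condition $t^{1-\varepsilon} \geq (e \cdot 4 d (r-1))^{1+\varepsilon}$ required by \autoref{theorem:poly-LLL-independent-sets}. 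Applying that theorem then produces in polynomial time a multicoloured independent set in $G$, which translates back into the desired family $\{P_{i,j} \mid (i,j) \in \mathcal{I}\}$ of pairwise vertex-disjoint paths from $A_i^{\textsf{out}}$ to $A_j^{\textsf{in}}$.

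The main obstacle is essentially just the bookkeeping of the parameters: confirming that the mismatch between the degeneracy factor appearing in the polynomial LLL condition (which involves $r - 1$) and the one stated in the lemma (which involves $a^2$) is absorbed correctly, and also noting that the construction of $G$ itself takes polynomial time since it only requires testing vertex intersections between paths of linkages of total size at most $a(a-1) \cdot b$. Once those observations are made, the two remaining steps, namely the call to \autoref{theorem:poly-LLL-independent-sets} and the subsequent call to \autoref{lemma:sparse-winning-scenario} with $\alpha = 1$, are routine and both run in polynomial time, completing the argument.
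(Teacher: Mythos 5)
Your proposal is correct and matches the paper's approach: the paper obtains this lemma by following the original proof of Masa\v{r}\'{\i}k et al.\ and replacing the non-constructive Reed--Wood multicoloured independent set lemma with \autoref{theorem:poly-LLL-independent-sets}, which is exactly your reduction (one colour class per linkage, degeneracy $d$, $r-1 < a^2$ absorbing the hypothesis into the polynomial LLL condition, then \autoref{lemma:sparse-winning-scenario} with congestion parameter $1$ giving a bramble of congestion $2+2\cdot 1 = 4$). The parameter bookkeeping and the polynomial running time of building the intersection graph are handled as you describe, so nothing is missing.
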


In addition, for  case \lipItem{2.} another computation step is needed in the form of the \emph{Bowtie Lemma}~\cite[Lemma 3.7]{Masarik2022}, whose proof also naturally yields a polynomial-time algorithm, and the \emph{dense winning scenario}.
Again, the algorithmic version follows by incorporating~\autoref{proposition:degeneracy-clique-minor} to the original proof.
\begin{lemma}[dense winning scenario~{\cite[Lemma 3.1]{Masarik2022}}, algorithmic version]\label{lemma:dense-winning-scenario}
Let $c_t$ be the constant from \autoref{proposition:degeneracy-clique-minor}.
If a digraph $D$ contains a family $\mathcal{W}$ of closed walks of congestion at most $\alpha$ whose intersection graph is not $c_t \cdot d \cdot \sqrt{\log d}$ degenerate, then in polynomial time we can find in $D$ a bramble of congestion $\alpha$ and size $d$.
\end{lemma}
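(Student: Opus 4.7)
The plan is to invoke \autoref{proposition:degeneracy-clique-minor} on $\Int(\mathcal{W})$ to extract a $K_d$-minor, and then read off the bramble directly from the branch sets of that minor. Since $\Int(\mathcal{W})$ is by hypothesis not $c_t \cdot d \sqrt{\log d}$-degenerate, it has degeneracy at least $c_t \cdot d \sqrt{\log d}$, so \autoref{proposition:degeneracy-clique-minor} (applied with $t = d$) produces a $K_d$-minor in polynomial time. Let $\mathcal{B}_1, \ldots, \mathcal{B}_d \subseteq \mathcal{W}$ be the corresponding branch sets: they are pairwise disjoint, each $\mathcal{B}_i$ induces a connected subgraph in $\Int(\mathcal{W})$, and for every distinct $i,j \in [d]$ there exist $W \in \mathcal{B}_i$ and $W' \in \mathcal{B}_j$ with $V(W) \cap V(W') \neq \emptyset$.

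For each $i \in [d]$, let $B_i$ be the subgraph of $D$ induced by $\bigcup_{W \in \mathcal{B}_i} V(W)$. The bramble will be $\{B_1, \ldots, B_d\}$. I would check the three required properties in turn. First, $B_i$ is strongly connected: each closed walk $W \in \mathcal{B}_i$ induces a strongly connected subgraph of $D$, and since $\mathcal{B}_i$ is connected in $\Int(\mathcal{W})$, any two walks in $\mathcal{B}_i$ can be joined by a sequence of walks in $\mathcal{B}_i$ in which consecutive walks share at least one vertex; strong connectivity of the union then follows by a straightforward induction on the length of such a sequence. Second, for distinct $i,j$, the branch-set adjacency in the $K_d$-minor provides a shared vertex of $B_i$ and $B_j$, so $D[V(B_i) \cup V(B_j)]$ is the union of two strongly connected subgraphs sharing a vertex, hence strongly connected, which is exactly the bramble condition of \autoref{def:brambles-digraphs}.

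For the congestion bound, observe that each vertex $v \in V(D)$ appears in at most $\alpha$ walks of $\mathcal{W}$ by the hypothesis on $\mathcal{W}$, and each walk of $\mathcal{W}$ belongs to at most one branch set because the $\mathcal{B}_i$ are pairwise disjoint; therefore $v$ lies in at most $\alpha$ of the bags $B_1, \ldots, B_d$. Together with $|\{B_1,\ldots,B_d\}| = d$, this delivers a bramble of congestion at most $\alpha$ and size $d$, as required.

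The running time is dominated by the call to \autoref{proposition:degeneracy-clique-minor} and by the initial construction of $\Int(\mathcal{W})$, both of which are polynomial in the size of $D$ and $|\mathcal{W}|$. There is no genuine obstacle here: the entire proof is essentially a translation of ``$K_d$-minor in the intersection graph'' into ``$d$-bag bramble,'' so the only non-trivial ingredient is the algorithmic clique-minor result of Dujmovi\'c et al., which has already been imported as \autoref{proposition:degeneracy-clique-minor}.
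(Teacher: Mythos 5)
Your proof is correct and follows essentially the same route as the paper, which obtains this lemma by taking the original proof of the dense winning scenario in \masarik et al.~\cite{Masarik2022} (a $K_d$-minor in the intersection graph of the walks, whose branch sets become the bags) and replacing the existential Kostochka--Thomason bound with the algorithmic clique-minor result imported as \autoref{proposition:degeneracy-clique-minor}. Your reconstruction of that argument (strong connectivity of each bag via the connectivity of branch sets, the shared vertex from clique-minor adjacency, and congestion $\alpha$ from the disjointness of branch sets) matches the intended proof.
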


For case \lipItem{1.}, let $\mathcal{I} = V \setminus (V(M_1) \cup Z)$.
The goal is to apply \autoref{lemma:sparse-winning-scenario-wrapped} to a path system built from linkages associated with pairs $(i,j) \in \mathcal{I}$.
Since the pairs in $Z$ are excluded, every linkage in $\{L'_{i,j} \mid (i,j) \in \mathcal{I}\}$ has size at least $(e \cdot 4a^2 \cdot d_1)^{\alpha} + 1$.
Thus we need that
\begin{align*}
\left((e \cdot 4a^2\cdot d_1)^{\alpha}\right)^{1-\varepsilon} &\geq \left(e \cdot 4a^2 \cdot d_1\right)^{1+\varepsilon}\\
(e \cdot 4a^2 \cdot d_1)^{\alpha(1-\varepsilon)} &\geq (e \cdot 4a^2 \cdot d_1)^{1+\varepsilon}.
\end{align*}
Thus, it suffices to choose $\alpha$ and $\varepsilon$ such that
\begin{equation}
\alpha(1-\varepsilon) \geq 1 + \varepsilon,
\end{equation}
noticing that the running time of \autoref{theorem:poly-LLL-independent-sets} (and thus of \cref{theorem:linkedness-brambles-constant-congestion,theorem:strongly-connected-implies-linkedness}) is inversely proportional to $\varepsilon$, and the increase in the original dependency of $k$ needed for the construction is directly proportional to $\alpha$.
Since $\alpha$ has to be at least $1$ for the choices of $b$, $x$, and $d$ above to be feasible for the construction, and $\varepsilon > 0$ is needed to apply \autoref{theorem:poly-LLL-independent-sets}, we conclude that $0 < \varepsilon < 1$ and $\alpha > 1$.
Taking $\alpha$ and $\varepsilon$ satisfying those bounds and observing that $|\mathcal{I}| \geq 0.6|V| \geq 0.6a(a-1)$ (since in $V$ we excluded only the pairs $(i,i)$ with $i\in [a]$), we apply \autoref{lemma:sparse-winning-scenario-wrapped} and finish the construction.

In case \lipItem{2.}, the goal is to apply \autoref{lemma:sparse-winning-scenario} to a path system  built in three steps.
For each each $e \in M_1 \cup M_2$ with endpoints $(i,j)$ and $(i', j')$, the Bowtie Lemma~\cite[Lemma 3.7]{Masarik2022} is applied to $L_{i,j}$ and $L_{i', j'}$ to, in short, construct sets of closed walks with bounded congestion assuming that the intersection graph of the two linkages has sufficiently large degeneracy.
By the definitions of $H_1$ and $H_2$, for each $\ell \in \{1,2\}$ and $e \in M_{\ell}$ the intersection graph of the pair of threaded linkages associated with the endpoints of $e$ is not $d_\ell$-degenerate, each such edge is associated with a set $Z_e$ of closed walks of congestion at most $4$ with $|Z_e| \geq d_\ell/(2^9\cdot 5)$ (the denominator comes from the statement of the Bowtie Lemma).
By construction of $Z$, the same holds for each $(i,j) \in Z$.

Let $c_t$ be the constant from \autoref{proposition:degeneracy-clique-minor} and
\[d_3 = \left\lceil c_t \cdot k\sqrt{\log k}\right\rceil.\]
If, for some $F \subseteq Z \cup M_1 \cup M_2$ of size $1$ or $2$ the intersection graph $\Int(\bigcup_{g \in F}Z_g)$ is not $d_3$-degenerate, then immediately apply \autoref{lemma:dense-winning-scenario} to construct the desired bramble and the algorithm stops.
Thus, we now assume that for any such $F$ the graph $\Int(\bigcup_{g \in F}Z_g)$is $d_3$-degenerate.

Let
\begin{align*}
d_2 &= \left\lceil 2^9\cdot 5\cdot \left(e \cdot 4a^2 \cdot d_3\right)^{\alpha}\right\rceil \text{ and}\\
d_1 &= \left\lceil 2^9\cdot 5\cdot \left(e \cdot 4a^2 \cdot d_2\right)^{\alpha}\right\rceil.
\end{align*}

Let $J$ be the undirected $(|Z| + |M_1| + |M_2|)$-partite graph with partition $\{V_g \mid g \in Z \cup M_1 \cup M_2\}$ and such that, for every pair of distinct $g, g' \in Z \cup M_1 \cup M_2$, the graph $J[V_g \cup V_{g'}]$ is a copy of $\Int(Z_g \cup Z_{g'})$.
The goal now is to apply \autoref{theorem:poly-LLL-independent-sets} to select, from each $g \in Z \cup M_1 \cup M_2$, a walk $W_g \in Z_g$ such that all such walks are pairwise vertex-disjoint.

Since for any distinct $g, g' \in Z \cup M_1 \cup M_2$ the graph $\Int(Z_g)$ is $d_3$-degenerate and for each $g \in Z \cup M_1 \cup M_2$ we have $|Z_g| \geq d_2/(2^9\cdot 5)$, it follows that
\[\frac{d_2}{2^9 \cdot 5} \geq (e \cdot 4a^2 \cdot d_3)^{\alpha}.\]
Thus to apply \autoref{theorem:poly-LLL-independent-sets} we need that
\[\left((e \cdot 4a^2\cdot d_3)^{\alpha}\right)^{1-\varepsilon} \geq \left(e \cdot 4a^2 \cdot d_3\right)^{1+\varepsilon}.\]
Similarly to case \lipItem{1.}, it suffices to pick $\alpha$ and $\varepsilon$ such that $\alpha(1-\varepsilon) \geq (1+\varepsilon)$, $0 < \varepsilon < 1$, and $\alpha > 1$. 
From these walks, a path system is constructed and given as input to \autoref{lemma:sparse-winning-scenario} in order to construct a bramble of congestion at most $6$, and the proof of case \lipItem{2.} is finished.

For case \lipItem{3.}, the goal is to apply \autoref{lemma:sparse-winning-scenario-wrapped}.
Let $\mathcal{I} = |V \setminus V(M_2)|$.
By assumption, $|\mathcal{I}| \geq 0.6a(a-1)$ and thus the first condition of \autoref{lemma:sparse-winning-scenario-wrapped} is satisfied.
As a consequence of the applications of the Bowtie Lemma and the choice of $Z$, the authors of~\cite{Masarik2022} construct, for each $(i,j) \in \mathcal{I}$, a linkage $L^{\mathfrak{Z}}_{i,j}$ such that $|L^{\mathfrak{Z}}_{i,j}| \geq d_1/(2^9 \cdot 5)$ and for each pair of distinct $(i,j),(i',j') \in \mathcal{I}$ the graph $\Int(L^{\mathfrak{Z}}_{i,j} \cup L^{\mathfrak{Z}}f_{i',j'})$ is $d_2$-degenerate.
Thus, similarly to the previous cases, to apply \autoref{lemma:sparse-winning-scenario-wrapped} we need that
\[\left((e \cdot 4a^2\cdot d_2)^{\alpha}\right)^{1-\varepsilon} \geq \left(e \cdot 4a^2 \cdot d_2\right)^{1+\varepsilon}\]
and the previous choices of $\alpha$ and $\varepsilon$ also suffice in this case.

For reference, we now repeat the choices of $b, d_1, d_2$, and $d_3$.
We also include the choice of $a$ as it is made in~\cite{Masarik2022}.
Let $c$ be the constant of \autoref{lemma:sparse-winning-scenario-wrapped} and $c_a = c^{-4}$.
In this article, we chose
\begin{align*}
a &= \left\lceil c_a \cdot k^2 \sqrt{1+ \log k}\right\rceil,\\
d_3 &= \left\lceil c_t \cdot k\sqrt{\log k}\right\rceil,\\
d_2 &= \left\lceil 2^9\cdot 5\cdot \left(e \cdot 4a^2 \cdot d_3\right)^{\alpha}\right\rceil,\\
d_1 &= \left\lceil 2^9\cdot 5\cdot \left(e \cdot 4a^2 \cdot d_2\right)^{\alpha}\right\rceil, \text{ and}\\
b &= \left\lceil (e \cdot 4a^2 \cdot d^2_1)^{\alpha}\right\rceil.
\end{align*}

Since we started with an $(a^2,b^2)$-path system, the choices of $t$ and $\ell$ in the statement of \autoref{theorem:constant-congestion-brambles-poly-time} are $a^2$ and $b^2$, respectively.

\subsection{Analysis of the running time}\label{subsection:running-time-analysis}
The main contributions to the running time of the algorithm come from applications of \cite[Lemmas 3.5, 3.6, and 3.7]{Masarik2022} and \autoref{theorem:poly-LLL-independent-sets}, which depends on the LLL.
The other constructions and selections done internally to the proof presented by Masa\v{r}\'{\i}k et al. in \cite[Section 4]{Masarik2022} can be done by simply following paths of the path system and picking walks and/or paths with special properties that are guaranteed to exist from the existential proofs and that can be efficiently verified.
Thus, the impact of such steps on the running time is overshadowed by the contribution of the aforementioned lemmas, and we refrain from commenting on those in order to maintain a reasonable size for this article.

Let $n = |V(D)|$.
As remarked above, \cite[Lemmas 3.5 and 3.6]{Matthias2014} are both applied to all vertices of $G$ to build closed walks and (untangled) threaded linkages, and both naturally yield polynomial-time algorithms.
The first can be realized by simply traversing the paths of each linkage $L_{i,j}$ of the $(a,b)$-path system in a particular order that guarantees that each  $P \in L_{i,j}$ is traversed only once.
The resulting algorithm runs in time $\mathcal{O}(a^2 \cdot b \cdot n)$ since every $L_{i,j}$ has size $b$, $a(a-1)$ linkages are processed, and each path has size at most $n$.
Similarly, the latter of those two lemmas yields an algorithm that runs in time $\mathcal{O}(b \cdot n)$ by following the greedy construction traversing the walk associated with the threaded linkage of size $b$.
Thus the applications of \cite[Lemmas 3.5 and 3.6]{Matthias2014} contribute to a total of $\mathcal{O}(a^2 \cdot b \cdot n)$ to the running time of the construction.

For $i \in [2]$, the graph $H_i$ defined in the construction is built by constructing intersection graphs and testing their degeneracy.
The intersection graph of two linkages of size $b$ can be built in time $\mathcal{O}(b^2 \cdot n)$.
Matula and Beck~\cite{10.1145/2402.322385} showed that the degeneracy of a graph can be found in time $\mathcal{O}(n^2)$\footnote{The actual running time is $\mathcal{O}(n+m)$, where $m$ is the number of edges of the input graph.}.
Each of these procedures is applied $\binom{a(a-1)}{2}$ times since each vertex of $G$ is associated with a linkage of the given $(a,b)$-path system, and thus $|V(G)| = a(a-1)$.
The construction of each $H_i$ therefore contributes with $\mathcal{O}(a^4 \cdot (b^2\cdot n + n^2))$ to the running time of the construction.

The running time of the Bowtie Lemma~\cite[Lemma 3.7]{Masarik2022} depends on the running time of the Partitioning Lemma of Masa\v{r}\'{\i}k et al.~\cite[Lemma 11]{MasarikPacking2022} (see also cf.~\cite[Lemma 3.8]{Masarik2022}).
The latter consists of counting of degrees of sequences of vertices of a bipartite graph to build subgraphs with a guaranteed number of edges, and proceeding recursively on those subgraphs.
The procedure yields an algorithm running in time $\mathcal{O}(\ell \cdot n^2)$, where $\ell$ is, informally, the number of parts we want to construct.
After this, the remaining part of the Bowtie Lemma is algorithmic, running in time $\mathcal{O}(b^2\cdot n^2)$, since the input threaded linkages have size $b$, and thus the applications of the Bowtie Lemma contribute to a total of $\mathcal{O}(a^4(d_1 \cdot n^2 + b^2 \cdot n^2))$ since it is applied to threaded linkages associated with edges of $M_1$ and $M_2$.

Up to this point, the total running time includes the terms
\[\mathcal{O}(a^2 \cdot b \cdot n) + \mathcal{O}(a^4 \cdot (b^2\cdot n + n^2)) + \mathcal{O}(a^4(d_1 \cdot n^2 + b^2 \cdot n^2)),\]
which boils down to $\mathcal{O}(a^4\cdot b^2\cdot n^2)$ since $b > d_1$.
From the choice of $b$, we get
\[b = \mathcal{O}\left(k^{4\alpha + 8\alpha^2 + 10\alpha^3} \cdot (\log k)^{\alpha + 2\alpha^2 + 3\alpha^3}\right)\]
and thus so far the running time is bounded by
\begin{equation}\label{eq:running-time-first-part}
\mathcal{O}\left(k^{8 + 8\alpha + 16\alpha^2 + 20\alpha^3} \cdot (\log k)^{2 + 2\alpha  + 4\alpha^2 + 6\alpha^3} \cdot n^2\right).
\end{equation}
since $a^4 = \Ocal(k^8 \cdot   \log^2 k)$.

For $\alpha = 1$ we can apply \autoref{proposition:Harris-independent-set} to obtain \autoref{theorem:strongly-connected-implies-linkedness-using-harris}.

To guarantee that the construction ends in polynomial time when using \autoref{theorem:poly-LLL-independent-sets}, the relationship $\alpha(1-\varepsilon) \geq 1 + \varepsilon$ must be respected.
Since $\alpha \geq 1$ is needed for the construction to be possible, this implies that $0 < \varepsilon <  1$ and $\alpha > 1$. Together with the fact that $\varepsilon$ is inversely proportional to the running time stated in the \autoref{proposition:LLL-polynomial-time}, these observations imply a trade-off on the choices of $\varepsilon$ and $\alpha$.
In other words, the closer $\varepsilon$ is to $1$, the faster the construction ends, and the price that we pay is increasing the dependency on $k$ that is used to find the bramble, which in turn depends on $\alpha$.

When running the algorithm, \autoref{theorem:poly-LLL-independent-sets} is applied at most once since it appears once in the three cases distinguished in the proof.
It is either used as part of the construction of the path system used to solve case \lipItem{2.}, or is implicitly applied in a call to \autoref{lemma:sparse-winning-scenario-wrapped} when solving cases \lipItem{1.} and \lipItem{3.}.
The running time of \autoref{theorem:poly-LLL-independent-sets} is split into two parts: the construction of the input graph $G$ and then an application of \autoref{proposition:LLL-polynomial-time}.
When constructing the bramble, the graph $G$ on which \autoref{theorem:poly-LLL-independent-sets} is applied can be built by constructing at most $a^4$ intersection graphs of linkages of size $b$, and so $G$ can be built in time $\mathcal{O}(a^4 \cdot b^2 \cdot n)$.
The dependence on $n$ is needed only to verify if two given paths intersect on a vertex and has no impact on the size of $V(G)$, which turns out to be $\mathcal{O}(a^2 \cdot b)$ since it contains at most $a^2$ linkages of size $b$, and thus \autoref{proposition:LLL-polynomial-time} runs on an input whose size depends on $k$ and $\alpha$ only.

From the aforementioned discussions, we distinguish two parts determining the total running time of the construction, writing it as
\[\mathcal{O}\left(k^{8 + 8\alpha + 16\alpha^2 + 20\alpha^3} \cdot (\log k)^{2 + 2\alpha  + 4\alpha^2 + 6\alpha^3} \cdot n^2 + \poly(k, \alpha)\right),\]
where the first term of the running time comes from \autoref{eq:running-time-first-part} and the second term from \autoref{theorem:poly-LLL-independent-sets}, choosing $\varepsilon$ to be the largest possible while respecting the aforementioned restrictions.
Based on assisted computation, reasonable choices for $\alpha$ and $\varepsilon$ with respect to the term that is independent of $n$ are roughly $1.66$ and $0.248$, respectively.
These choices yield a running time of $\mathcal{O}(k^{156.76}\cdot (\log k)^{43.77} \cdot n^2 + k^{566.13}\cdot (\log k)^{189.66})$ and $t_k = \mathcal{O}(k^{301.21}\cdot (\log k^2)^{42.15})$ in the statement of \autoref{theorem:strongly-connected-implies-linkedness}.
The impact of the first term approaches $\mathcal{O}(k^{52} \cdot (\log k)^{14} \cdot n^2)$ as $\alpha$ approaches $1$.
With $\alpha = 1.001$, for example, we obtain $\mathcal{O}(k^{53.01} \cdot (\log k)^{14.29} \cdot n^2)$ for the first term, but the exponent on $k$ of the second term grows over $10000$.

\bibliography{main_full}

\end{document}